\documentclass[a4paper]{article}
\usepackage{geometry}
\usepackage{amsmath,amsfonts,amssymb,amsthm,stmaryrd}
\usepackage[small]{diagrams}
\usepackage{hyperref}
\usepackage{bm}

\usepackage{sectsty}
\sectionfont{\sffamily}
\subsectionfont{\sffamily}
\subsubsectionfont{\sffamily}

\newtheoremstyle{myplain}
  {\topsep}   
  {\topsep}   
  {\itshape}  
  {0pt}       
  {\bfseries\sffamily} 
  {.}         
  {5pt plus 1pt minus 1pt} 
  {}          
  
\newtheoremstyle{mydefinition}
  {\topsep}   
  {\topsep}   
  {\normalfont}  
  {0pt}       
  {\bfseries\sffamily} 
  {.}         
  {5pt plus 1pt minus 1pt} 
  {}          
  
\newtheoremstyle{myremark}
  {0.5\topsep}   
  {0.5\topsep}   
  {\normalfont}  
  {0pt}       
  {\sffamily} 
  {.}         
  {5pt plus 1pt minus 1pt} 
  {}          

\theoremstyle{mydefinition}
\newtheorem{definition}{$\blacktriangleright$ Definition}[subsection]
\theoremstyle{myplain}
\newtheorem{theorem}[definition]{$\blacktriangleright$ Theorem}
\newtheorem{lemma}[definition]{$\blacktriangleright$ Lemma}

\newtheorem{proposition}[definition]{$\blacktriangleright$ Proposition}
\theoremstyle{myremark}
\newtheorem*{remark}{$\blacktriangleright$ Remark}
\newtheorem*{notation}{$\blacktriangleright$ Notation}

\newtheorem*{convention}{$\blacktriangleright$ Convention}

\newarrow{Corresponds}<--->

\newcommand{\prd}{\hfill$\blacktriangleleft$}


\begin{document}
\title{\Huge \bf Game-theoretic Interpretation of Type Theory Part II \\ \Large Uniqueness of Identity Proofs and Univalence}
\author{\Large \bf Norihiro Yamada \\ \\
{\tt norihiro.yamada@cs.ox.ac.uk} \\
Department of Computer Science \\
University of Oxford
}

\maketitle

\begin{abstract}
In the present paper, based on the previous work (Part I), we present a game semantics for the intensional variant of intuitionistic type theory that refutes the principle of uniqueness of identity proofs and validates the univalence axiom, though we do not interpret non-trivial higher propositional equalities. Specifically, following the historic groupoid interpretation by Hofmann and Streicher, we equip predicative games in Part I with a \emph{groupoid} structure, which gives rise to the notion of \emph{(predicative) gamoids}. Roughly, gamoids are \emph{``games with (computational) equalities specified''}, which interpret subtleties in Id-types. We then formulate a category with families of predicative gamoids, equipped with $\prod$-, $\sum$- and Id-types as well as universes, which forms a concrete instance of the groupoid model. We believe that this work is an important stepping-stone towards a complete interpretation of homotopy type theory.
\end{abstract}
\tableofcontents

\section{Introduction}
This paper, a continuation of Part I (\cite{yamada2016game}), presents a game semantics for the intensional variant of intuitionistic type theory (ITT) \cite{martin1984intuitionistic, martin1998intuitionistic, rose1984intuitionistic} that refutes the principle of uniqueness of identity proofs (UIP), and admits the univalence axiom (UA) as well as the axiom of function extensionality (FunExt), though we do not interpret non-trivial higher propositional equalities. 

In the previous work \cite{yamada2016game}, we presented an interpretation of ITT with $\prod$-, $\sum$-, and Id-types as well as the hierarchy of universes in terms of games and strategies. Specifically, we gave an instance of a category with families (CwF) \cite{dybjer1996internal, hofmann1997syntax} (a categorical model of dependent type theory) $\mathcal{IPG}$ of \emph{predicative games} and \emph{generalized strategies}, a generalization of games and strategies in the existing game semantics \cite{abramsky2000full,hyland2000full,mccusker1998games} that achieves an interpretation of dependent types and universes in a systematic way. However, it admits UIP and refutes UA; thus, it does not completely interpret phenomena of propositional equalities in homotopy type theory (HoTT) \cite{voevodsky2013homotopy}, a recent variant of ITT based on the homotopy-theoretic interpretation, which gets much attention from the community of mathematics, logic, and computer science.

Meanwhile, we recognized that $\mathcal{IPG}$ appears quite similar to the CwF $\mathcal{GPD}$ of groupoids in the historic paper \cite{hofmann1998groupoid}, which showed for the first time that UIP is not derivable in ITT. Also, it later led to the homotopy-theoretic interpretation of the type theory, which resulted in HoTT.  From this observation, we equip $\mathcal{IPG}$ with a groupoid structure, forming the CwF $\mathcal{PGD}$ of ``\emph{predicative gamoids}''. Roughly, predicative gamoids are \emph{``predicative games with (computational) equalities specified''}. This additional structure refines the interpretation of propositional equalities, which is the main improvement from the previous work \cite{yamada2016game}. 
Importantly, $\mathcal{PGD}$ has turned out to be a concrete instance (more precisely a ``subcategory with families'') of $\mathcal{GPD}$, though we did not particularly intend to do so.  

As a result, our model in $\mathcal{PGD}$ interprets phenomena in HoTT better than the model in $\mathcal{IPG}$: It refutes UIP, and admits UA as well as FunExt. However, it still fails to interpret the infinite hierarchy of Id-types.
Thus, as a future work, we shall generalize predicative gamoids to form an instance of \emph{$\omega$-groupoids} in order to capture non-trivial higher equalities (as several papers such as \cite{warren2011strict, van2011types, lumsdaine2009weak} did, though their models are abstract, categorical ones). Also, it remains to obtain definability and full abstraction results.

The rest of the paper is structured as follows. We define a category of predicative gamoids in Section \ref{PredicativeGamoids}, and give some constructions on them in Section \ref{ConstructionsOnGamoids}. We then define a CwF of predicative gamoids in Section \ref{GTIITT}, which is a highlight of the present paper. We finally investigate some of its properties in Section~\ref{Intensionality}; in particular, we show that it refutes UIP and admits UA.

\begin{remark}
Throughout the present paper, we assume that the reader is familiar with the basic notions and results in Part I (\cite{yamada2016game}).
\end{remark}

We ends this introduction by introducing a convenient notation:
\begin{notation}
Let $A, B$ be predicative games, and $\sigma : A$, $\tau : B$ strategies. We write $\sigma \! \gg \! \tau$ for the strategy on the linear implication $A \! \multimap \! B$ that plays as $\tau$ up to the ``tags for disjoint union''. Moreover, if $\phi (\sigma_1, \dots, \sigma_n) : B$ is a composition of strategies $\sigma_1 : A_1, \dots, \sigma_n : A_n$ and possibly some other strategies, then $(\sigma_1 \otimes \dots \otimes \sigma_n) \! \rightleftharpoons \! \phi (\sigma_1, \dots, \sigma_n)$ denotes the strategy on $A \! \multimap \! B$, where $A \stackrel{\mathrm{df. }}{=} A_1 \otimes \dots \otimes A_n$ that plays as $\phi (\sigma_1, \dots, \sigma_n)$ plus ``copy-cat'' between the two occurrences of $\sigma_i$ for all $i = 1, \dots, n$. Abusing the notation, we apply these notations for the implication $A \to B = \ !A \! \multimap \! B$ as well.
\end{notation}

\pagebreak
\section{Predicative Gamoids}
\label{PredicativeGamoids}
We begin with equipping \emph{predicative games} defined in the previous paper \cite{yamada2016game} with a \emph{groupoid} structure. We shall call the resulting notion \emph{predicative gamoids}, which can be considered as \emph{``predicative games with (computational) equalities specified''}. Note that this construction is applicable for other variants of games as well; such a more general notion should be called \emph{gamoids}.

\subsection{Gamoids}
We first define the general notion of \emph{gamoids}. The underlying games and strategies here may be arbitrary; but for concreteness, ``games'' and ``strategies'' in this section refer to the variant in \cite{mccusker1998games} (as in the previous paper \cite{yamada2016game}).

Let $G$ be a game, and $\sigma, \sigma' : G$ strategies. Recall that \emph{quasi-copy-cat strategies} (or \emph{qcc strategies} for short) $p : \sigma \simeq_G \! \sigma'$ are strategies $p : \sigma \! \multimap \! \sigma'$ that are history-free isomorphisms respecting labels and justifiers (for the precise definition, see \cite{yamada2016game}). It has been shown in \cite{yamada2016game} that qcc strategies $p : \sigma \simeq_G \! \sigma'$ are a kind of graph isomorphisms between the strategies $\sigma$ and $\sigma'$. Thus, qcc strategies appear an appropriate notion of ``isomorphisms'' between strategies, but to interpret the univalence axiom, we need to relax them to isomorphism strategies (see Section~\ref{Intensionality}). Also, recall that a \emph{groupoid} is a category whose morphisms are all isomorphisms (i.e., invertible morphisms).

We now define the notion of \emph{gamoids}:
\begin{definition}[Gamoids]
A {\bf gamoid} is a groupoid such that:
\begin{itemize}

\item Objects are strategies on a fixed game.

\item Morphisms are isomorphism strategies.

\end{itemize}
Morphisms in a gamoid are called {\bf identification strategies} (or {\bf identifications} for short).
\prd
\end{definition}

As the name suggests, morphisms in a gamoid are intended to be \emph{``computational witnesses of equalities''} between strategies on the underlying game. They are ``computational'' because strategies represent algorithms or constructive proofs.

\begin{notation}
A gamoid is usually specified by a pair $(G, =_G)$ of the underlying game $G$ and the set $=_G$ of its morphisms. We often abbreviate it as $G$. Moreover, we often write $\sigma =_G \! \sigma'$ for the hom-set $G(\sigma, \sigma')$, and $p : \sigma =_G \! \sigma'$ when $p \in G(\sigma, \sigma')$. 
\end{notation}

Abusing the notation, we write $\sigma =_G \! \sigma'$ and say that  $\sigma$ and $\sigma'$ are {\bf (computationally) equal} if $\sigma =_G \! \sigma' \neq \emptyset$. We then need to distinguish two kinds of equalities:

\begin{convention}
We write $=$ without any subscript for equality ``on the nose'', called the {\bf strict equality}, between strategies, while we always put the subscript $G$ for the {\bf computational equality} $=_G$ in a gamoid $G$. Also, an {\bf equality} refers to a \emph{computational} equality by default.
\end{convention}

\begin{remark}
It is a very important point that an identification strategy $p : \sigma =_G \! \sigma'$ in a gamoid $G$ is an isomorphism strategy but \emph{not necessarily from $\sigma$ to $\sigma'$}. We need this arbitrariness because an ``equality'' in mathematics often focuses on some relevant or partial information of objects, e.g., consider the congruence relation on integers. This point will be illustrated later by concrete examples.
\end{remark}

Conceptually, a gamoid $(G, =_G)$ is a game $G$ equipped with a set $=_G$ of (selected) isomorphism strategies that determines a computational equality between its strategies. It is a groupoid because an equality must be an equivalence relation (i.e., a reflexive, symmetric, and transitive relation). We emphasize here that we derived this concept directly from the seminal groupoid interpretation by Hofmann and Streicher \cite{hofmann1998groupoid}. Intuitively, computational equalities should be weaker than strict equalities as ``what can be observed to be the same'' may be limited, which is achieved by our definition above. More importantly, the notion of gamoids enables us to ``tailor'' an equality in each game. E.g., in the usual game semantics such as \cite{abramsky2000full,hyland2000full,mccusker1998games}, we have $\underline{n} \simeq_{\mathcal{N}} \underline{m}$, where $\mathcal{N}$ is the \emph{natural numbers game}, and $\underline{n}, \underline{m} : \mathcal{N}$ are strategies for natural numbers $n, m$, respectively, even if $n \neq m$; thus, the gamoid $\mathcal{N}$ of natural numbers must not count this isomorphism strategy as an identification.

In mathematics, this phenomenon is everywhere: We have some ``space'' of objects equipped with equalities between them (it may be more appropriate to say \emph{equivalences} or \emph{congruences}). Note that once such equalities between ``atomic objects'' have been defined, equalities between ``compound objects'' constructed from such atomic objects should be automatically determined. In other words, a construction on such spaces must operate on equalities as well. For example, pairs $(x, y)$, $(z, w)$ are equal iff $x, z$ are equal and $y, w$ are equal. We shall reflect these facts in the constructions on (predicative) gamoids in later sections.

Now, we introduce two particular kinds of gamoids: \emph{canonical} and \emph{discrete} gamoids:
\begin{proposition}[Games as groupoids]
Any game $G$ induces the {\bf canonical gamoid} $\mathcal{C}(G) = (G, \simeq_G)$ and the {\bf discrete gamoid} $\mathcal{D}(G) = (G, =)$, where $\simeq_G$ (resp. $=$) denotes the set of all isomorphism strategies (resp. strict equalities) between strategies on $G$.
\end{proposition}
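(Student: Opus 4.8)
The plan is to check, in each of the two cases, the two defining clauses of a gamoid: that the structure is a groupoid, and that its objects are strategies on the fixed game $G$ while its morphisms are isomorphism strategies. The object clause holds by construction in both cases, and the morphism clause follows once the morphisms are exhibited as isomorphism strategies; so the real content lies in the groupoid structure, namely identities, an associative composition, unit laws, and inverses.

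For the canonical gamoid $\mathcal{C}(G) = (G, \simeq_G)$, I would take composition of morphisms to be the ordinary composition of strategies and the identity at each strategy $\sigma$ to be the copy-cat $\mathrm{cp}_\sigma : \sigma \simeq_G \sigma$, which is the prototypical qcc strategy. Associativity of composition and the unit laws are then inherited directly from the corresponding laws for strategy composition in the underlying game semantics of \cite{mccusker1998games}, recalled in Part I. The two facts requiring genuine verification are that $\simeq_G$ is closed under composition, so that the composite $q \circ p$ of qcc strategies $p : \sigma \simeq_G \sigma'$ and $q : \sigma' \simeq_G \sigma''$ is again a history-free isomorphism respecting labels and justifiers, and that each $p : \sigma \simeq_G \sigma'$ has an inverse $p^{-1} : \sigma' \simeq_G \sigma$ lying in $\simeq_G$. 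For both I would invoke the characterization, recalled in the excerpt, of qcc strategies as graph isomorphisms between the strategies $\sigma$ and $\sigma'$: the composite of two graph isomorphisms is again one, and every graph isomorphism is invertible, so the desired closure reduces to checking that strategy composition and the inverse operation on qcc strategies implement, respectively, the composition and inversion of the underlying bijections on moves.

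For the discrete gamoid $\mathcal{D}(G) = (G, =)$ the verification is essentially trivial: the hom-set from $\sigma$ to $\sigma'$ is the singleton $\{\mathrm{cp}_\sigma\}$ when $\sigma = \sigma'$ on the nose and is empty otherwise, so $\mathcal{D}(G)$ is simply the discrete groupoid on the set of strategies on $G$. Identities are the copy-cats, composition is forced (one can only compose $\mathrm{cp}_\sigma$ with itself, yielding $\mathrm{cp}_\sigma$ again), and each copy-cat is its own inverse; since copy-cats are in particular isomorphism strategies, all clauses hold at once.

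I expect the only step demanding more than routine bookkeeping to be the closure of $\simeq_G$ under composition in the canonical case: one must confirm that the categorical composite of two qcc strategies, obtained by composing strategies through the intermediate object $\sigma'$, remains history-free and an isomorphism respecting labels and justifiers, rather than degenerating into a strategy that fails to be an isomorphism. Once this composite is seen to coincide with the set-theoretic composition of the underlying graph isomorphisms, both closure under composition and closure under inverses follow from the standard fact that graph isomorphisms form a groupoid, completing the argument.
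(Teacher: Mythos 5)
Your verification is correct and matches the paper, whose entire proof is ``Immediate from the definition''; you have simply spelled out the routine groupoid checks (identities via copy-cat, closure under composition and inverses, the discrete case being the discrete groupoid on the set of strategies) that the paper leaves implicit. The one caveat is that the proposition's $\simeq_G$ is declared to be the set of \emph{all isomorphism strategies} --- the paper's explicit relaxation of qcc strategies, introduced precisely because univalence requires it --- rather than the qcc strategies alone, so your closure argument should be phrased for general invertible strategies instead of leaning on the graph-isomorphism characterization of qcc strategies; this only makes the closure properties more immediate, not less.
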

\begin{proof}
Immediate from the definition.
\end{proof}
That is, a canonical gamoid $\mathcal{C}(G)$ is a game $G$ that equates all isomorphic strategies, while a discrete gamoid $\mathcal{D}(G)$ is a game $G$ that equates only strictly equal strategies. 

Next, the notion of \emph{subgamoids} is defined in the obvious way:
\begin{definition}[Subgamoids]
A {\bf subgamoid} of a gamoid $G$ is a gamoid $G'$ that is a subcategory of $G$.
In this case, we write $G' \leqslant G$. 
\prd
\end{definition}

\subsection{The Category of Predicative Gamoids}
From now on, we shall focus on a particular kind of gamoids, called \emph{predicative gamoids}:
\begin{definition}[Predicative gamoids]
A gamoid whose underlying structure is the category $\mathcal{PG}$ of predicative games and generalized strategies defined in \cite{yamada2016game} is called a {\bf predicative gamoid}.
\prd
\end{definition}

\begin{remark}
Strictly speaking, the above definition is not precise because the notion of gamoids is defined on the variant of games and strategies of \cite{mccusker1998games}. We meant that a predicative gamoid is the resulting structure when we apply the construction of gamoids to a predicative game.
\end{remark}

We proceed to define the category of predicative gamoids, which will form the ``universe'' to interpret intuitionistic type theory. But we first need a preliminary notion:
\begin{definition}[Equality-preserving strategies]
Let $A, B$ be gamoids. A strategy $\tau : A \! \multimap \! B$ is said to be {\bf equality-preserving} if it is equipped with an equality $\tau_p : \tau \circ \sigma_1 =_B \! \tau \circ \sigma_2$ in $B$ for each triple $(\sigma_1, \sigma_2, p)$ of strategies $\sigma_1, \sigma_2 : A$ and an equality $p : \sigma_1 =_A \! \sigma_2$ in $A$ such that the maps
\begin{align*}
(\sigma : A) &\mapsto \tau \circ \sigma \\
(p : \sigma_1 =_A \! \sigma_2) &\mapsto \tau_p : \tau \circ \sigma_1 =_B \! \tau \circ \sigma_2
\end{align*}
form a functor $A \to B$.
\prd
\end{definition}

Conceptually, the equality-preserving property is a reasonable requirement for strategies $\tau$ of (linear) function type: If inputs $\sigma_1, \sigma_2 : A$ are equal, then the outputs $\tau \circ \sigma_1, \tau \circ \sigma_2 : B$ must be equal. Also, from the category-theoretic point of view, it is a very natural definition: A gamoid is a game equipped with the structure of a category (groupoid), and an equality-preserving strategy is a strategy equipped with the structure of a functor. From a different angle, we may say that an equality-preserving strategy plays ``in two dimensions'' as well in addition to the usual play ``in one dimension'' of a mere strategy. Again, we note that this formulation coincides with that of the groupoid interpretation in \cite{hofmann1998groupoid}.

Before proceeding further, let us recall some notations from the previous paper:
\begin{notation}
We write $\&$ and $!$ for paring and promotion of generalized strategies, respectively.
Also, we usually write $A \to B$ for the linear implication $!A \! \multimap \! B$. Moreover, for generalized strategies $\phi : A \to B$, $\psi : B \to C$, we define $\psi \bullet \phi \stackrel{\mathrm{df. }}{=} \psi \ \! \circ \ \! ! \phi : A \to C$. 
\end{notation}

We now define the category of predicative gamoids:
\begin{definition}[The category $\mathcal{PGD}$]
\label{PGD}
The category $\mathcal{PGD}$ of predicative gamoids and equality-preserving (generalized) strategies is defined as follows:
\begin{itemize}

\item Objects are predicative gamoids.

\item A morphism $A \to B$ is an equality-preserving generalized strategy $\phi : A \to B$.

\item The composition of morphisms is the composition of functors, i.e., the composition of morphisms $\phi : A \to B$, $\psi : B \to C$ is the composition $\psi \bullet \phi : A \to C$ of strategies, equipped with the equality $(\psi \bullet \phi)_p \stackrel{\mathrm{df. }}{=} \psi_{\phi_p} : (\psi \bullet \phi) \bullet \sigma_1 =_C \! (\psi \bullet \phi) \bullet \sigma_2$ for each $\sigma_1, \sigma_2 : A$, $p : \sigma_1 =_A \! \sigma_2$.

\item The identity $\mathsf{id}_A$ on each object $A$ is the dereliction $\mathsf{der}_A$ with the equality $(\mathsf{der}_A)_p \stackrel{\mathrm{df. }}{=} p$ for each $\sigma_1, \sigma_2 : A$, $p : \sigma_1 =_A \! \sigma_2$.

\end{itemize}
\prd
\end{definition}

\begin{remark}
Strictly speaking, an equality-preserving generalized strategy $\phi : A \to B$ is equipped with an equality $\phi_p : \phi \circ \sigma_1 =_B \! \phi \circ \sigma_2$ for each $\sigma_1, \sigma_2 : \ ! A$, $p : \sigma_1 =_{!A} \! \sigma_2$. However, since there is an obvious bijection between such triples $\sigma_1, \sigma_2 : \ ! A$, $p : \sigma_1 =_{!A} \! \sigma_2$ in $!A$ and triples $\sigma'_1, \sigma'_2 : A$, $p' : \sigma'_1 =_{A} \! \sigma'_2$ in $A$, we shall focus on strategies and equalities in $A$ when we talk about inputs of such a strategy $\phi : A \to B$, and write $\phi_p$ rather than $\phi_{!p}$.
\end{remark}

Morphisms in $\mathcal{PGD}$ are strategies with the structure of functors, which differs from the category $\mathcal{IPG}$ in \cite{yamada2016game}. This means that such morphisms see the correspondence not only between strategies but also \emph{between identifications}, which will be the structure to interpret Id-types. 

Of course, we need to establish:
\begin{proposition}[Well-defined $\mathcal{PGD}$]
The structure $\mathcal{PGD}$ forms a well-defined category.
\end{proposition}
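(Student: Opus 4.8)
The plan is to reduce the verification to the already-established fact that the underlying generalized strategies form a category. Recall from Part I that $\mathcal{PG}$, with $\bullet$ as composition and derelictions as identities, satisfies the category axioms; hence the underlying-strategy component of every axiom of $\mathcal{PGD}$ holds for free. The only genuinely new content is that the \emph{functorial equality-structure} attached to composites and identities is (i) well-typed, (ii) actually functorial, so that composites and identities are again equality-preserving, and (iii) associative and unital. I would organize the proof around these three points.

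First, closure and identities. For identities, $\mathsf{der}_A$ equipped with $(\mathsf{der}_A)_p = p$ sends $\sigma \mapsto \mathsf{der}_A \circ \sigma = \sigma$ (dereliction acts as the identity, by Part I) and $p \mapsto p$, so its functor structure is literally the identity functor on the groupoid $A$; it trivially preserves identities and composition, hence is equality-preserving. For closure under composition, given equality-preserving $\phi : A \to B$ and $\psi : B \to C$ with associated functors $F_\phi$ and $F_\psi$, I would check that $\psi \bullet \phi$ equipped with $(\psi\bullet\phi)_p = \psi_{\phi_p}$ is exactly the composite functor $F_\psi \circ F_\phi$. On objects this amounts to checking that $(\psi\bullet\phi)\circ\sigma$ coincides with $\psi\circ(\phi\circ\sigma)$, which is where the promotion bookkeeping enters (discussed below). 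On morphisms, functoriality follows by composing functoriality of $F_\phi$ and then $F_\psi$: for instance $\psi_{\phi_{q\circ p}} = \psi_{\phi_q\circ\phi_p} = \psi_{\phi_q}\circ\psi_{\phi_p}$, and identities are preserved similarly. This shows that $\psi\bullet\phi$ is again equality-preserving, so composition is well-defined.

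Next, the associativity and unit laws. At the level of underlying strategies these are inherited from Part I. At the level of equalities they hold strictly and formally: for associativity, $((\chi\bullet\psi)\bullet\phi)_p = (\chi\bullet\psi)_{\phi_p} = \chi_{\psi_{\phi_p}}$, while $(\chi\bullet(\psi\bullet\phi))_p = \chi_{(\psi\bullet\phi)_p} = \chi_{\psi_{\phi_p}}$, so the two agree; for the units, $(\phi\bullet\mathsf{der}_A)_p = \phi_{(\mathsf{der}_A)_p} = \phi_p$ and $(\mathsf{der}_B\bullet\phi)_p = (\mathsf{der}_B)_{\phi_p} = \phi_p$. These are just the associativity and unitality of ordinary function (functor) composition.

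I expect the main obstacle to be point (i), the bookkeeping of typing rather than any conceptual difficulty. Concretely, one must confirm that $\psi_{\phi_p}$ lands in the intended hom-set $(\psi\bullet\phi)\circ\sigma_1 =_C (\psi\bullet\phi)\circ\sigma_2$, which requires the object-level identity relating $(\psi\bullet\phi)\circ\sigma$ to $\psi\circ(\phi\circ\sigma)$; establishing this uses the composition law for generalized strategies together with the behaviour of promotion and dereliction from Part I, and the bijection (noted in the preceding remark) between equalities in $!A$ and in $A$. Once this typing is pinned down, every remaining check is a routine instance of the functor axioms, and the proposition follows.
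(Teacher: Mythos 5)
Your proposal is correct and follows essentially the same route as the paper: the paper likewise disposes of well-definedness and associativity by observing that composition of morphisms is just composition of functors, and verifies the unit laws by the identical computation $(\phi\bullet\mathsf{der}_A)_p = \phi_{(\mathsf{der}_A)_p} = \phi_p$ and $(\mathsf{der}_B\bullet\phi)_p = (\mathsf{der}_B)_{\phi_p} = \phi_p$. You simply spell out more of the functoriality and typing bookkeeping that the paper leaves implicit.
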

\begin{proof}
First, the composition of morphisms is well-defined and associative because it is just the composition of functors. 

Next, for each object $A$, the dereliction $\mathsf{der}_A$ with the equality $(\mathsf{der}_A)_p \stackrel{\mathrm{df. }}{=} p$ is clearly a well-defined morphism $A \to A$. The unit law for the object-map is obvious; for the arrow-map, observe that, for any morphism $\phi : A \to B$, we have
\begin{align*}
(\phi \bullet \mathsf{der}_A)_p &= \phi_{(\mathsf{der}_A)_p} = \phi_{p} \\
(\mathsf{der}_B \bullet \phi)_p &= (\mathsf{der}_B)_{\phi_p} = \phi_p
\end{align*}
for all $\sigma_1, \sigma_2 : A$, $p : \sigma_1 =_A \! \sigma_2$. Thus, $\phi \bullet \mathsf{der}_A = \phi = \mathsf{der}_B \bullet \phi$ as morphisms in $\mathcal{PGD}$.
\end{proof}

\pagebreak
\section{Constructions on Gamoids}
\label{ConstructionsOnGamoids}
In this section, we refine the constructions on predicative games in the previous paper \cite{yamada2016game} to define the corresponding constructions on predicative gamoids. This means that we shall equip them with operations on (computational) equalities.

Importantly, our aim is to capture phenomena in homotopy type theory (HoTT) in terms of games and strategies, in particular \emph{propositional equalities} by computational equalities. Thus, we shall define constructions on equalities reflecting properties of propositional equalities in HoTT, e.g., we will define equalities between dependent functions in such a way that the principle of function extensionality holds (see Section~\ref{Intensionality}).

\subsection{Dependent Gamoids}
Recall that, in \cite{yamada2016game}, we interpreted a dependent type by the notion of a \emph{dependent game}, which is basically a collection of games indexed over strategies on a fixed game. But now, we have a categorical structure on games, so we can refine this notion as \emph{functors}. 
\begin{definition}[Dependent gamoids]
\label{DependentGamoids}
A {\bf dependent gamoid} on a predicative gamoid $A$ is a functor $B : A \to \mathcal{PGD}$.
\prd
\end{definition}

\begin{remark}
Alternatively, we may have defined a dependent gamoid $B$ on a predicative gamoid $A$ as a morphism $B : A \to \mathcal{U}$ in the category $\mathcal{PGD}$, where $\mathcal{U}$ is an appropriate \emph{universe gamoid} (for this, we need to require that the identification $B_p : B \bullet \sigma_1 =_\mathcal{U} \! B \bullet \sigma_2$ is again an equality-preserving strategy for any $p : \sigma_1 =_A \! \sigma_2$). This in fact seems a reasonable definition, and we could take this route without any problem; however, for simplicity, we defined dependent gamoids as in Definition~\ref{DependentGamoids}.
\end{remark}

A dependent gamoid $B : A \to \mathcal{PGD}$ yields not only a predicative gamoid $B\sigma \in \mathcal{PGD}$ for each strategy $\sigma : A$ but also an isomorphism functor $Bp : B\sigma_1 \stackrel{\simeq}{\to} B\sigma_2$ for each equality $p : \sigma_1 =_A \! \sigma_2$.
Thus, e.g., we can automatically interpret Leibniz' law. 
\if0
As indicated in the definition of the category $\mathcal{PGD}$, we may compose contravariant functors on groupoids and have identity ones, obtaining the structure of a category:
\begin{definition}[The category of groupoids and contravariant functors]
\label{GPDNabula}
We define the category $\mathcal{GPD}^\nabla$ of groupoids and contravariant functors between them as follows:
\begin{itemize}

\item Objects are groupoids.

\item Morphisms $\mathcal{C} \to \mathcal{D}$ are contravariant functors $\mathcal{C} \to \mathcal{D}$.

\item Composition $G \circ F$ of morphisms $F : \mathcal{C} \to \mathcal{D}$, $G : \mathcal{D} \to \mathcal{E}$ is defined by
\begin{align*}
(A \in \mathcal{C}) &\mapsto G(FA) \\
(f : A \to B) &\mapsto G(Ff^{-1}) : G(FB) \to G(FA)
\end{align*}

\item The identity $\mathsf{id}_\mathcal{C}$ on each object $\mathcal{C}$ is defined by
\begin{align*}
(A \in \mathcal{C}) &\mapsto A \\
(f : A \to B) &\mapsto f^{-1} : B \to A
\end{align*}

\end{itemize}
\prd
\end{definition}

Similar to the proof that shows the category $\mathcal{PGD}$ is well-defined, it is easy to show that the category $\mathcal{GPD}^\nabla$ is also a well-defined category. From now on, composition and identities in the context of contravariant functors on groupoids refer to what we have defined above.
\fi

Below, we proceed to define constructions on predicative gamoids, in which dependent gamoids are involved.

\subsection{Dependent Union}
Based on the constructions on predicative games defined in \cite{yamada2016game}, we shall define the corresponding constructions on predicative gamoids. We begin with the construction of \emph{dependent union}, which will serve as a convenient preliminary notion.

\begin{notation}
For readability, we write $\odot$ and $(\_)^\star$ for the composition and inverse in dependent unions, respectively.
\end{notation}

\begin{definition}[Dependent union]
Given a dependent gamoid $B : A \to \mathcal{PGD}$, its {\bf dependent union} $\uplus B$ is defined as follows:
\begin{itemize}
\item The underlying predicative game $\uplus B$ is the subgame of the dependent union of the underlying dependent game $B$ over the underlying predicative game $A$ whose strategies are all equality-preserving.

\item For any objects $\tau, \tau' : \uplus B$, we define 
\begin{equation*}
\tau =_{\uplus B} \! \tau' \stackrel{\mathrm{df. }}{=} \textstyle \bigcup_{p : \sigma =_A \sigma'} Bp \bullet \tau =_{B\sigma'} \! \tau' 
\end{equation*}
where $\sigma, \sigma' : A$, $\tau : B\sigma$, $\tau' : B\sigma'$.

\item The composition $q' \! \odot q$ of identifications $q : \tau =_{\uplus B} \! \tau'$, $q' : \tau' =_{\uplus B} \! \tau''$, where $\tau : B\sigma$, $\tau' : B\sigma'$, $\tau'' : B\sigma''$, $q : Bp \bullet \tau =_{B\sigma'} \! \tau'$, $q' : Bp' \bullet \tau' =_{B\sigma''} \! \tau''$ for some $\sigma, \sigma', \sigma'' : A$, $p : \sigma =_A \! \sigma'$, $p' : \sigma' =_A \! \sigma''$, is defined by
\begin{equation*}
q' \! \odot q \stackrel{\mathrm{df. }}{=} q' \! \circ (Bp')_q : B(p' \! \circ p) \bullet \tau =_{B\sigma''} \! \tau''
\end{equation*}
where $\sigma \! \stackrel{p}{=}_A \! \sigma' \! \stackrel{p'}{=}_A \! \sigma''$.

\item The identity on each object $\tau : B\sigma$ is the identity $\mathsf{id}_\tau$ in $B\sigma$. 

\end{itemize}
\prd
\end{definition}

\begin{remark}
Note that identifications $q : \tau =_{\uplus B} \! \tau'$, where $\tau : B\sigma$, $\tau' : B\sigma'$, $\sigma, \sigma' : A$, in a dependent union $\uplus B$ are strategies not between $\tau$ and $\tau'$ but between $Bp \bullet \tau$ and $\tau'$ for some $p : \sigma =_A \! \sigma'$. This is one of the main reasons why identifications $\zeta =_G \! \zeta'$ in a gamoid $G$ in general are not necessarily between the strategies $\zeta$ and $\zeta'$ themselves.
\end{remark}

Of course, we need to establish the following:
\begin{proposition}[Well-defined dependent union]
For any dependent gamoid $B : A \to \mathcal{PGD}$, the dependent union $\uplus B$ is a well-defined predicative gamoid.
\end{proposition}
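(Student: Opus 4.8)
The plan is to verify that $\uplus B$ satisfies the axioms of a groupoid, namely that $=_{\uplus B}$ is a well-defined family of hom-sets, that the composition $\odot$ is associative and unital, and that every identification is invertible. Since the underlying predicative game of $\uplus B$ has already been constructed in \cite{yamada2016game} (as the subgame of the dependent union whose strategies are equality-preserving), the only genuine work is checking the \emph{groupoid} data. First I would confirm that the proposed composition $q' \odot q = q' \circ (Bp')_q$ is type-correct: given $q : Bp \bullet \tau =_{B\sigma'} \tau'$ and $q' : Bp' \bullet \tau' =_{B\sigma''} \tau''$, applying the functor $Bp'$ to $q$ yields $(Bp')_q : Bp' \bullet (Bp \bullet \tau) =_{B\sigma''} Bp' \bullet \tau'$, and since $B$ is a functor we have $Bp' \bullet (Bp \bullet \tau) = B(p' \circ p) \bullet \tau$; composing with $q'$ in the groupoid $B\sigma''$ then lands in $B(p' \circ p) \bullet \tau =_{B\sigma''} \tau''$, as claimed. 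This shows $\odot$ is well-defined and witnesses the union $\bigcup_{p} Bp \bullet \tau =_{B\sigma'} \tau'$ is closed under composition.

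Next I would check associativity and the unit laws. Associativity should follow by unwinding both bracketings using functoriality of $B$ (each $Bp'$ is a functor, so $(Bp')_{(-)}$ respects composition) together with associativity of composition in the fibre groupoids $B\sigma$; the key identity is that for composable $p, p', p''$ one has $B(p'' \circ p' \circ p) = Bp'' \bullet Bp' \bullet Bp$ on objects, and the functoriality clauses $(Bp'')_{q' \circ (\cdots)} = (Bp'')_{q'} \circ (Bp'')_{(\cdots)}$ reduce the two expressions to a common normal form. For the unit law, the identity on $\tau : B\sigma$ is $\mathsf{id}_\tau$ in $B\sigma$, which corresponds to the reflexivity witness $p = \mathsf{id}_\sigma$ in $A$; using $B(\mathsf{id}_\sigma) = \mathsf{id}_{B\sigma}$ and the functor law $(\mathsf{id}_{B\sigma})_q = q$, one checks $q \odot \mathsf{id}_\tau = q$ and $\mathsf{id}_{\tau'} \odot q = q$ directly.

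Finally I would exhibit inverses. Given $q : Bp \bullet \tau =_{B\sigma'} \tau'$ with $p : \sigma =_A \sigma'$, the natural candidate is $q^\star \stackrel{\mathrm{df.}}{=} (Bp^{-1})_{q^{-1}}$, where $q^{-1}$ is the inverse of $q$ in the fibre $B\sigma'$ and $p^{-1} : \sigma' =_A \sigma$ is the inverse of $p$ in $A$ (both exist because $A$ and each $B\sigma$ are groupoids). Applying $Bp^{-1}$ to $q^{-1} : \tau' =_{B\sigma'} Bp \bullet \tau$ gives a witness in $B\sigma$ of type $Bp^{-1} \bullet \tau' =_{B\sigma} Bp^{-1} \bullet (Bp \bullet \tau) = B(\mathsf{id}_\sigma) \bullet \tau = \tau$, so $q^\star : Bp^{-1} \bullet \tau' =_{B\sigma} \tau$ is a legitimate identification $\tau' =_{\uplus B} \tau$, and a short functoriality computation shows $q^\star \odot q = \mathsf{id}_\tau$ and $q \odot q^\star = \mathsf{id}_{\tau'}$.

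The main obstacle, and the step deserving the most care, is the bookkeeping around the indexing strategies $p$ in $A$: because an identification in $\uplus B$ records not just a fibrewise equality but also the witness $p$ connecting the base points, every groupoid law must be checked \emph{relative to} the corresponding law in $A$, and the two functor laws for $B$ (preservation of composition and of identities, applied to both objects and morphisms of the fibres) must be invoked in exactly the right places. I expect no conceptual difficulty, only the need to keep the layered functoriality straight; the payoff is that all the groupoid structure of $\uplus B$ is forced by, and inherited from, the groupoid structures of $A$ and the fibres $B\sigma$ via the functoriality of $B$.
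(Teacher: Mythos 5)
Your proposal is correct and follows essentially the same route as the paper's own proof: the same type-checking of $q' \odot q = q' \circ (Bp')_q$ via $Bp' \bullet Bp = B(p' \circ p)$, the same inverse $q^\star = (Bp^{-1})_{q^{-1}}$, and the same reduction of associativity and the unit laws to functoriality of $B$ and the groupoid laws in the fibres. The paper merely writes out the equational chains you sketch.
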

\begin{proof}
It is straightforward to see that the composition and identities are well-defined, where the functoriality of $B$ is essential. The inverse $q^\star$ of each identification $q : Bp \bullet \tau =_{B\sigma'} \! \tau'$ is given by:
\begin{equation*}
q^\star \stackrel{\mathrm{df. }}{=} (Bp^{-1})_{q^{-1}} : Bp^{-1} \! \bullet \tau' =_{B\sigma} \! \tau
\end{equation*}
where $\sigma, \sigma' : A$, $\tau : B\sigma$, $\tau' : B\sigma'$, $p : \sigma =_A \! \sigma'$.
In fact, $q^\star \! \odot q = (Bp^{-1})_{q^{-1}} \! \circ (Bp^{-1})_q = (Bp^{-1})_{q^{-1} \circ q} = (Bp^{-1})_{\mathsf{id}_{Bp \bullet \tau}} = \mathsf{id}_{Bp^{-1} \bullet Bp \bullet \tau} = \mathsf{id}_{B(p^{-1} \circ p) \bullet \tau} = \mathsf{id}_{B(\mathsf{id}_\sigma) \bullet \tau} = \mathsf{id}_{\mathsf{id}_{B\sigma} \bullet \tau} = \mathsf{id}_\tau$, and
$q \odot q^\star = q \circ (Bp)_{(Bp^{-1})_{q^{-1}}} = q \circ (Bp \bullet Bp^{-1})_{q^{-1}} = q \circ B(p \circ p^{-1})_{q^{-1}} = q \circ B(\mathsf{id}_{\sigma'})_{q^{-1}} = q \circ (\mathsf{id}_{B\sigma'})_{q^{-1}} = q \circ q^{-1} = \mathsf{id}_{\tau'}$.

For associativity of the composition, let $q : \tau =_{\uplus B} \! \tau'$, $q' : \tau' =_{\uplus B} \! \tau''$, $q'' : \tau'' =_{\uplus B} \! \tau'''$, where $\tau : B\sigma$, $\tau' : B\sigma'$, $\tau'' : B\sigma''$, $\tau''' : B\sigma'''$ and $q : Bp \bullet \tau =_{B\sigma'} \! \tau'$, $q' : Bp' \bullet \tau' =_{B\sigma''} \! \tau''$, $q'' : Bp'' \bullet \tau'' =_{B\sigma'''} \! \tau'''$ for some $\sigma, \sigma', \sigma'', \sigma''' : A$, $p : \sigma =_A \! \sigma'$, $p' : \sigma' =_A \! \sigma''$, $p'' : \sigma'' =_A \! \sigma'''$. Then observe that:
\begin{align*}
q'' \! \odot (q' \! \odot q) &= q'' \! \odot (q' \circ (Bp')_q) \\
&= q'' \circ (Bp'')_{q' \circ (Bp')_q} \\
&= q'' \circ ((Bp'')_{q'} \circ (Bp'')_{(Bp')_q}) \\
&= (q'' \circ (Bp'')_{q'}) \circ (Bp'')_{(Bp')_q} \\
&= (q'' \! \odot q') \circ (Bp'' \! \bullet Bp')_q \\
&= (q'' \! \odot q') \circ B(p'' \! \circ p')_q \\
&= (q'' \! \odot q') \odot q.
\end{align*}

For the unit law, observe that:
\begin{align*}
q \odot \mathsf{id}_\tau &= q \circ (Bp)_{\mathsf{id}_\tau} = q \circ \mathsf{id}_{Bp \bullet \tau} = q \\
\mathsf{id}_{\tau'} \! \odot q &= \mathsf{id}_{\tau'} \! \circ B(\mathsf{id}_{\sigma'})_q = \mathsf{id}_{\tau'} \! \circ (\mathsf{id}_{B\sigma'})_q = \mathsf{id}_{\tau'} \! \circ q = q
\end{align*}
which completes the proof.
\end{proof}

\subsection{Dependent Function Space}
Next, we consider the construction of \emph{dependent function space} that is intended to interpret dependent function types. 
We already defined dependent function spaces $\widehat{\prod}(A, B)$ of dependent games $B : A \to \mathcal{PG}$ in the previous paper \cite{yamada2016game}; the challenge here is how to define \emph{identifications in dependent function games}. As mentioned earlier, we shall define them reflecting phenomena in HoTT; in the case of dependent functions, we take the ``point-wise'' identification as the definition: An identification $q : \tau =_{\widehat{\prod}(A, B)} \! \tau'$ is defined to be a family 
\begin{equation*}
(q_\sigma : \tau \bullet \sigma =_{B\sigma} \! \tau' \bullet \sigma)_{\sigma : A}
\end{equation*}
of identifications in $\uplus B$.
Note that it coincides with the groupoid interpretation \cite{hofmann1998groupoid}. The remaining point is how to regard such a family as a \emph{single} isomorphism strategy; however, the structure of predicative games enables us to define $q \stackrel{\mathrm{df. }}{=} \& \{ q_\sigma \ \! | \ \! \sigma : A \}$.

\begin{definition}[Dependent function space]
Given a dependent gamoid $B : A \to \mathcal{PGD}$, the {\bf dependent function space} $\widehat{\prod}(A, B)$ of $B$ over $A$ is defined as follows:
\begin{itemize}

\item The underlying predicative game $\widehat{\prod}(A, B)$ is the subgame of the dependent function space of the underlying dependent game $B$ over the underlying predicative game $A$ whose strategies $\tau$ are equality-preserving and satisfy $\tau_p : Bp \bullet (\tau \bullet \sigma_1) =_{B\sigma_2} \! \tau \bullet \sigma_2$ for all $\sigma_1, \sigma_2 : A, p : \sigma_1 =_A \sigma_2$. Objects of a dependent function space are called {\bf dependent functions}.

\item For each pair $\tau, \tau' : \widehat{\prod}(A, B)$ of objects, the hom-set $\tau =_{\widehat{\prod}(A, B)} \! \tau'$ is defined by
\begin{equation*}
\tau =_{\widehat{\prod}(A, B)} \! \tau' \stackrel{\mathrm{df. }}{=} \{ \& \{ q_\sigma \ \! | \! \ \sigma : A \} \ \! | \! \ \text{$q : \tau \to \tau'$ is a natural transformation} \}.
\end{equation*}
We usually write $q$ for an identification $\& \{ q_\sigma \ \! | \! \ \sigma : A \}$.

\item The composition of identifications $q: \tau =_{\widehat{\prod}(A,B)} \! \tau'$, $q': \tau' =_{\widehat{\prod}(A,B)} \! \tau''$ is defined by 
\begin{equation*}
q' \! \circ q \stackrel{\mathrm{df. }}{=} \& \{ q'_\sigma \! \circ q_\sigma \ \! | \ \! \sigma : A \} : \tau =_{\widehat{\prod}(A,B)} \! \tau''
\end{equation*}
i.e., $(q' \! \circ q)_\sigma \stackrel{\mathrm{df. }}{=} q'_\sigma \! \circ q_\sigma$ for all $\sigma : A$. 

\item The identity $\textsf{id}_\tau$ on each object $\tau : \widehat{\prod}(A,B)$ is defined to be $\& \{ \mathsf{id}_{\tau \bullet \sigma} \ \! | \ \! \sigma : A \}$.

\end{itemize}
In particular, if $B$ is a ``constant dependent gamoid'' (i.e., a constant functor), then we write $A \to B$ for $\widehat{\prod}(A,B)$, and call it the {\bf implication} from $A$ to $B$.
\prd
\end{definition}

As one may immediately recognize, the dependent function space $\widehat{\prod}(A, B)$ is intended to be a generalization of the implication $A \to B$, in which the ambient game of outputs may depend on inputs. 

\begin{remark}
We require the naturality condition on identifications in dependent function spaces mainly in order to equip the \emph{evaluation strategy} $\mathsf{ev}$ (see \cite{abramsky2000full, mccusker1998games} for the definition) with the structure of a functor; see Lemma \ref{EV} below. 
\end{remark}

It is easy to see the following:
\begin{proposition}[Well-defined dependent function space]
For any dependent gamoid $B : A \to \mathcal{PGD}$, the dependent function space $\widehat{\prod}(A, B)$ is a well-defined predicative gamoid.
\end{proposition}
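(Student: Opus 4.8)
The plan is to verify the groupoid axioms for $\widehat{\prod}(A, B)$ by reducing each axiom to the corresponding fact in the fibre gamoids $B\sigma$, exploiting the fact that every piece of structure (composition, identity, inverse) is defined component-wise via the pairing operation $\&$. Since identifications $q : \tau =_{\widehat{\prod}(A,B)} \tau'$ are by definition families $(q_\sigma)_{\sigma : A}$ arising from natural transformations $q : \tau \to \tau'$, and $\&$ merely packages such a family as a single strategy, the category-theoretic content lives entirely in the assignment $\sigma \mapsto q_\sigma$. I would therefore treat $\&\{q_\sigma \mid \sigma : A\}$ and the family $(q_\sigma)$ as interchangeable throughout.

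First I would check that composition is well-defined: given $q : \tau \to \tau'$ and $q' : \tau' \to \tau''$ natural, their component-wise composite $(q' \circ q)_\sigma = q'_\sigma \circ q_\sigma$ is again a natural transformation $\tau \to \tau''$ (naturality of a composite of natural transformations is standard), so $q' \circ q$ is a legitimate identification. Each $q'_\sigma \circ q_\sigma$ is a composite of identifications in the fibre $B\sigma$ and hence lies in $\tau' \bullet \sigma =_{B\sigma} \tau'' \bullet \sigma$ composed with $\tau \bullet \sigma =_{B\sigma} \tau' \bullet \sigma$, so it is well-typed. Next I would confirm that $\mathsf{id}_\tau = \&\{\mathsf{id}_{\tau \bullet \sigma} \mid \sigma : A\}$ is itself a valid identification, i.e.\ that the family of fibre-wise identities assembles into a natural transformation $\tau \to \tau$; this is immediate since identity components trivially satisfy the naturality squares. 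The inverse of $q$ is the component-wise inverse $q^\star = \&\{(q_\sigma)^{-1} \mid \sigma : A\}$, which is again natural because each $B\sigma$ is a groupoid and inverses of a natural isomorphism form a natural transformation.

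With the structure maps in hand, the associativity, unit, and inverse laws follow directly: for each fixed $\sigma : A$ they reduce to the groupoid axioms already holding in $B\sigma$, and since $\&$ is defined component-wise, an equality of families of strategies is exactly a family of equalities of strategies. Thus, for example, $(q'' \circ q') \circ q$ and $q'' \circ (q' \circ q)$ have $\sigma$-components $(q''_\sigma \circ q'_\sigma) \circ q_\sigma$ and $q''_\sigma \circ (q'_\sigma \circ q_\sigma)$ respectively, which coincide by associativity in $B\sigma$; the unit and inverse laws are analogous.

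The main obstacle I anticipate is not the algebra but the two well-formedness constraints on objects and on identifications. On the object side, I must confirm that the subgame carving out equality-preserving strategies $\tau$ with $\tau_p : Bp \bullet (\tau \bullet \sigma_1) =_{B\sigma_2} \tau \bullet \sigma_2$ is closed under the intended structure, and that these constitute the objects of a genuine gamoid rather than a mere collection. On the identification side, the delicate point is checking that the \emph{naturality} requirement interacts correctly with the twisting by $Bp$ that is built into the condition $\tau_p$: the components $q_\sigma$ live in the fibres $B\sigma$, but the witnessing strategies $\tau, \tau'$ themselves carry transport data, so I would need to verify that the naturality squares (of $q : \tau \to \tau'$ as a natural transformation between the functors that $\tau$ and $\tau'$ induce) are exactly what makes the component-wise operations respect the $Bp$-transport; this is precisely the coherence that the earlier remark flags as motivating the naturality condition via the evaluation strategy. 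Once that compatibility is pinned down, closure of the structure maps under naturality and the reduction to fibre-wise groupoid axioms complete the argument.
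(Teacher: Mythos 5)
Your proposal is correct and follows essentially the same route as the paper, which simply notes that everything reduces to the already-established well-definedness of dependent unions and exhibits the inverse component-wise as $q^{-1} = \& \{ q_\sigma^{-1} \mid \sigma : A \}$. Your version spells out the fibre-wise verification of the groupoid axioms and the preservation of naturality under composition and inversion, which the paper leaves implicit as ``easy to see.''
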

\begin{proof}
Since we already established that dependent unions are well-defined, it is easy to see that the dependent function space $\widehat{\prod}(A, B)$ is a well-defined predicative gamoid, in which the inverse of an identification $q : \tau =_{\widehat{\prod}(A,B)} \! \tau'$ is given by $q^{-1} = \& \{ q_\sigma^{-1} \ \! | \ \! \sigma : A \}$.
\end{proof}

Intuitively, if dependent functions $\tau, \tau' : \widehat{\prod}(A, B)$ are equal, then their outputs should be equal when they are applied to equal strategies on $A$. However, there is a difficulty to overcome: When $\sigma, \sigma' : A$ and $p : \sigma =_A \! \sigma'$, we have $\tau \bullet \sigma : B\sigma$ and $\tau' \bullet \sigma' : B\sigma'$, so they may be on different games; if it is the case, we cannot even talk about their equality. However, note that there is an isomorphism functor $Bp : B\sigma \stackrel{\simeq}{\to} B\sigma'$; we then utilize its object-map to ``transport'' $\tau \bullet \sigma : B\sigma$ into $B\sigma'$, resolving the ``type-unmatch'' problem. This is the idea described in \cite{hofmann1998groupoid, voevodsky2013homotopy}, on which our definition above is based.

Now, let us see that we have achieved what is described above:
\begin{proposition}[Equality-preservation]
\label{EqualityPreservation}
Let $B : A \to \mathcal{PGD}$ be a dependent gamoid, and $\tau, \tau' \in \widehat{\prod}(A, B)$. If $\tau =_{\widehat{\prod}(A,B)} \! \tau'$, then $Bp \bullet (\tau \bullet \sigma) =_{B\sigma'} \! \tau' \bullet \sigma'$ for any strategies $\sigma, \sigma' : A$ with an identification $p : \sigma =_A \sigma'$.
\end{proposition}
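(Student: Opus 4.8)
The plan is to unfold the hypothesis into the data it provides and then exhibit one explicit identification witnessing the conclusion. By the standing convention, the assertion $Bp \bullet (\tau \bullet \sigma) =_{B\sigma'} \tau' \bullet \sigma'$ only claims that the corresponding hom-set is nonempty, so it suffices to construct a single morphism of the groupoid $B\sigma'$ of the right type; I fix arbitrary $\sigma, \sigma' : A$ and $p : \sigma =_A \sigma'$ throughout.

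First I would unpack $\tau =_{\widehat{\prod}(A,B)} \tau'$. By definition of the dependent function space this means there is a natural transformation $q : \tau \to \tau'$, presented as a family $(q_\sigma)_{\sigma : A}$ with $q_\sigma : \tau \bullet \sigma =_{B\sigma} \tau' \bullet \sigma$; in particular I have the component $q_\sigma$ at the given $\sigma$. I would also recall that, being an object of $\widehat{\prod}(A,B)$, the dependent function $\tau'$ carries the coherence identification $\tau'_p : Bp \bullet (\tau' \bullet \sigma) =_{B\sigma'} \tau' \bullet \sigma'$, and that $Bp : B\sigma \stackrel{\simeq}{\to} B\sigma'$, being a morphism of $\mathcal{PGD}$, is equality-preserving and hence comes with an arrow-map on identifications.

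The witness is then assembled in two moves, both taking place inside the single gamoid $B\sigma'$. Applying the arrow-map of $Bp$ to $q_\sigma$ transports the latter to $(Bp)_{q_\sigma} : Bp \bullet (\tau \bullet \sigma) =_{B\sigma'} Bp \bullet (\tau' \bullet \sigma)$, and composing with $\tau'_p$ gives
$$ \tau'_p \circ (Bp)_{q_\sigma} : Bp \bullet (\tau \bullet \sigma) =_{B\sigma'} \tau' \bullet \sigma', $$
which is precisely the required identification; as a composite of two morphisms of the groupoid $B\sigma'$ it is automatically legitimate there, so the hom-set is nonempty.

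Finally I would note the symmetric alternative $q_{\sigma'} \circ \tau_p$, built from the coherence of $\tau$ and the component of $q$ at $\sigma'$, and that the naturality of $q$ forces the two witnesses to agree — though this agreement is not needed for the present, purely existential claim. The only genuinely delicate point is type book-keeping: a priori $\tau \bullet \sigma$ and $\tau' \bullet \sigma'$ live in the different gamoids $B\sigma$ and $B\sigma'$, so the statement is well-formed only after transport along the object-map of $Bp$. Keeping the source and target of each identification straight, and recognising that this transport is exactly what makes the comparison meaningful, is where the care lies rather than in any computation.
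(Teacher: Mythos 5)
Your proposal is correct and matches the paper's proof: the paper exhibits the witness $\tau'_p \odot q_\sigma$ (and the alternative $q_{\sigma'} \odot \tau_p$), and your explicit composite $\tau'_p \circ (Bp)_{q_\sigma}$ is exactly what $\tau'_p \odot q_\sigma$ unfolds to under the definition of $\odot$ in the dependent union. Your closing remark that naturality makes the two witnesses agree is also precisely the observation the paper records immediately after its proof.
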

\begin{proof}
Assume that we have an identification $q : \tau =_{\widehat{\prod}(A,B)} \! \tau'$. Then we have at least two identifications $\tau'_p \odot q_\sigma, q_{\sigma'} \! \odot \tau_p : Bp \bullet (\tau \bullet \sigma) =_{B\sigma'} \! \tau' \bullet \sigma'$.
\end{proof}
Thus, as mentioned before, equal functions produce equal outputs when they take equal inputs. Note that, by naturality of $q$, the two identifications $\tau'_p \odot q_\sigma, q_{\sigma'} \! \odot \tau_p$ are actually the same, for both of which we write $q_p$.

\subsection{Dependent Pair Space}
We now proceed to define the construction of \emph{dependent pair space}, which is simpler than dependent function space as it is a generalization of product.
\begin{definition}[Dependent pair space]
Given a dependent gamoid $B : A \to \mathcal{PGD}$, the {\bf dependent pair space} $\widehat{\sum}(A, B)$ of $B$ over $A$ is defined as follows:
\begin{itemize}

\item The underlying predicative game $\widehat{\sum}(A, B)$ is the subgame of the dependent pair space of the underlying dependent game $B$ over the underlying predicative game $A$ whose strategies are all equality-preserving.

\item $\sigma \& \tau =_{\widehat{\sum}(A, B)} \! \sigma' \& \tau' \stackrel{\mathrm{df. }}{=} \{ p \& q \ \! | \ \! p : \sigma =_A \! \sigma', q : Bp \bullet \tau =_{B\sigma'} \! \tau' \}$ for all $\sigma, \sigma' : A$, $\tau : B\sigma$, $\tau' : B\sigma'$.

\item The composition of identifications $p \& q : \sigma \& \tau =_{\widehat{\sum}(A,B)} \! \sigma' \& \tau'$, $p' \& q' : \sigma' \& \tau'  =_{\widehat{\sum}(A,B)} \! \sigma'' \& \tau''$ is defined by:
\begin{equation*}
(p' \& q') \circ (p \& q) \stackrel{\mathrm{df. }}{=} (p' \! \circ p) \& (q' \! \odot q).
\end{equation*}

\item The identity $\mathsf{id}_{\sigma \& \tau}$ on each object $\sigma \& \tau$ is defined to be the paring $\mathsf{id}_\sigma \& \mathsf{id}_\tau$.

\end{itemize}
In particular, if $B$ is a ``constant dependent gamoid'', then we write $A \& B$ for the dependent pair space $\widehat{\sum}(A, B)$ and call it the {\bf product} of $A$ and $B$.
\prd
\end{definition}

As the name suggests, dependent pair spaces are intended to be a generalization of products, in which the ambient game of the second component can depend on the first component.

It is easy to see the following:
\begin{proposition}[Well-defined dependent pair space]
For any dependent gamoid $B : A \to \mathcal{PGD}$, the dependent pair space $\widehat{\sum}(A,B)$ is a well-defined predicative gamoid.
\end{proposition}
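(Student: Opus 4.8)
The plan is to reduce every groupoid axiom for $\widehat{\sum}(A,B)$ componentwise to the corresponding axiom in the groupoid $A$ (for the first components) and in the dependent union $\uplus B$ (for the second components), both of which are already known to be well-defined. First I would recall that the underlying predicative game $\widehat{\sum}(A,B)$ is well-defined by the construction in \cite{yamada2016game}: it is the subgame of the dependent pair space of the underlying dependent game $B$ over $A$ consisting of the equality-preserving strategies. Hence it remains only to verify that the groupoid structure specified in the definition is a legitimate one, and for this the essential observation is that an identification $p \,\&\, q : \sigma \,\&\, \tau =_{\widehat{\sum}(A,B)} \! \sigma' \,\&\, \tau'$ packages an identification $p : \sigma =_A \! \sigma'$ in $A$ together with an identification $q : Bp \bullet \tau =_{B\sigma'} \! \tau'$ of exactly the shape that already appears in the dependent union $\uplus B$.

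With this identification in place, the verification is routine and proceeds axiom by axiom. For type-checking of composition, given $p \,\&\, q$ and $p' \,\&\, q'$ as in the definition, one has $p' \! \circ p : \sigma =_A \! \sigma''$ by composition in $A$, while $q' \! \odot q : B(p' \! \circ p) \bullet \tau =_{B\sigma''} \! \tau''$ is precisely the composite already shown to be well-defined in the proof that $\uplus B$ is a gamoid; so $(p' \! \circ p) \,\&\, (q' \! \odot q)$ is a legitimate identification $\sigma \,\&\, \tau =_{\widehat{\sum}(A,B)} \! \sigma'' \,\&\, \tau''$. Associativity and the unit laws then follow because both components obey them: $\circ$ is associative and unital in $A$, and $\odot$ is associative and unital in $\uplus B$. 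For identities, functoriality of $B$ gives $B \,\mathsf{id}_\sigma = \mathsf{id}_{B\sigma}$, whence $B \,\mathsf{id}_\sigma \bullet \tau = \tau$ and $\mathsf{id}_\tau : \tau =_{B\sigma} \! \tau$ is exactly the identity of $\uplus B$ lying over $\mathsf{id}_\sigma$; thus $\mathsf{id}_\sigma \,\&\, \mathsf{id}_\tau$ is a genuine identity. Finally, invertibility is witnessed by $(p \,\&\, q)^{-1} \stackrel{\mathrm{df.}}{=} p^{-1} \,\&\, q^\star$, using the inverse $p^{-1}$ in $A$ and the inverse $q^\star$ in $\uplus B$; that this is a two-sided inverse is immediate from $p^{-1} \! \circ p = \mathsf{id}_\sigma$, $p \circ p^{-1} = \mathsf{id}_{\sigma'}$ in $A$ together with $q^\star \! \odot q = \mathsf{id}_\tau$, $q \odot q^\star = \mathsf{id}_{\tau'}$ in $\uplus B$.

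The genuine content is therefore almost entirely bookkeeping, and the one point deserving care is the very first step: confirming that the second-component data $q$ really does range over the hom-sets of $\uplus B$ and that the prescribed operations $\odot$, $q^\star$, and $\mathsf{id}_\tau$ coincide exactly with those already defined there, so that well-definedness is inherited rather than re-proved. I expect no serious obstacle beyond this, precisely because, unlike the dependent function space, the dependent pair space imposes no naturality condition and no separate equality-preservation requirement on identifications; at the level of identifications it is simply the groupoid $A$ glued fiberwise to $\uplus B$, so all the work has effectively been done in the well-definedness proof for $\uplus B$.
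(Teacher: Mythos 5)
Your proposal is correct and follows essentially the same route as the paper: the paper likewise type-checks the composition $(p' \circ p) \& (q' \odot q)$ by appeal to the already-established dependent union $\uplus B$, and exhibits the inverse as $p^{-1} \& q^\star$ using the inverse $q^\star$ from $\uplus B$. Your explicit remark that associativity, units, and invertibility are all inherited componentwise from $A$ and $\uplus B$ is exactly the content the paper compresses into ``it is straightforward to see.''
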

\begin{proof}
It is straightforward to see that the dependent pair space $\widehat{\sum}(A, B)$ is a well-defined category, e.g., the composition of identifications $p \& q : \sigma \& \tau =_{\widehat{\sum}(A,B)} \! \sigma' \& \tau'$, $p' \& q' : \sigma' \& \tau'  =_{\widehat{\sum}(A,B)} \! \sigma'' \& \tau''$, where $\sigma, \sigma', \sigma'' : A$, $\tau : B\sigma$, $\tau' : B\sigma'$, $\tau'' : B\sigma''$, is the product $(p' \! \circ p) \& (q' \! \odot q)$ with $p' \! \circ p : \sigma =_A \! \sigma''$ and $q' \! \odot q = q' \! \circ (Bp')_q : B(p' \! \circ p) \bullet \tau =_{B\sigma''} \! \tau''$, so it is an identification $\sigma \& \tau =_{\widehat{\sum}(A,B)} \! \sigma'' \& \tau''$, showing that the composition is well-defined.

It remains to show that each morphism is an isomorphism. As the inverse of $p \& q$, we take  $p^{-1} \& q^\star$, where $p^{-1} : \sigma' =_A \! \sigma$, $q^\star \! : Bp^{-1} \! \bullet \tau' =_{B\sigma} \! \tau$. Thus, $p^{-1} \& q^\star : \sigma' \& \tau' =_{\widehat{\sum}(A,B)} \! \sigma \& \tau$. Finally, it is straightforward to see that $(p \& q) \circ (p^{-1} \& q^\star) = \mathsf{id}_{\sigma' \& \tau'}$ and $(p^{-1} \& q^\star) \circ (p \& q) = \mathsf{id}_{\sigma \& \tau}$, showing that $p^{-1} \& q^\star = (p \& q)^{-1}$.
\end{proof}

\subsection{Id-gamoids}
Next, we define the construction of \emph{Id-gamoids}, which are to interpret Id-types.
\begin{definition}[Id-gamoids]
Given a predicative gamoid $G$ and objects $\sigma_1, \sigma_2 \in G$, the {\bf Id-gamoid} $\widehat{\textsf{Id}}_G(\sigma_1, \sigma_2)$ between $\sigma_1$ and $\sigma_2$ is defined to be the discrete gamoid $\mathcal{D}(\sigma_1 \! =_G \! \sigma_2)$, i.e., its objects are identifications between $\sigma_1$ and $\sigma_2$, equipped only with the trivial identifications between them.
\prd
\end{definition}

By the definition, Id-gamoids are clearly well-defined predicative gamoids.
Note that this definition follows the groupoid interpretation in \cite{hofmann1998groupoid}, which ``truncates'' all the non-trivial ``higher-morphisms''. To interpret the hierarchical structure of Id-types in intuitionistic type theory, we need to capture such non-trivial ``higher-identifications'', forming the structure of \emph{$\omega$-groupoids}; we shall address this problem in the next paper (Part III).

\pagebreak
\section{Game-theoretic Interpretation of ITT}
\label{GTIITT}
In this section, we present a category with families (CwF) of predicative gamoids, equipped with $\prod$-, $\sum$- and Id-types as well as universes. It can be seen as a refinement of the CwF $\mathcal{IPG}$ of predicative games in \cite{yamada2016game} by adding a groupoid structure. 

\subsection{Game-theoretic Category with Families for ITT}
For the definition of CwFs, see the standard references \cite{dybjer1996internal, hofmann1997syntax}, or the previous paper \cite{yamada2016game}. We now present our game-theoretic CwF of predicative gamoids.
\begin{definition}[The CwF $\mathcal{PGD}$]
We define the category with families $\mathcal{PGD}$ of predicative gamoids to be the structure $\mathcal{PGD} = (\mathcal{PGD}, \mathsf{Ty}, \mathsf{Tm}, \_\{\_\}, \mathcal{D}(I), \_.\_, \mathsf{p}, \mathsf{v}, \langle\_,\_\rangle_\_)$, where:
\begin{itemize}

\item The underlying category $\mathcal{PGD}$ is the category of predicative gamoids and equality-preserving strategies defined in Definition~\ref{PGD}. 

\item For each predicative gamoid $\Gamma \in \mathcal{PGD}$, we define $\mathsf{Ty}(\Gamma)$ to be the set of dependent gamoids on $\Gamma$.
In this situation, $\Gamma$ is particularly called a {\bf context gamoid}.

\item For a context gamoid $\Gamma \in \mathcal{PGD}$ and a dependent gamoid $A \in \mathsf{Ty}(\Gamma)$, we define $\mathsf{Tm}(\Gamma, A)$ to be the set of objects of the dependent function space $\widehat{\prod} (\Gamma, A)$.

\item For each morphism $\phi : \Delta \to \Gamma$ in $\mathcal{PGD}$, the function
\begin{equation*}
\_\{\phi\} : \mathsf{Ty}(\Gamma) \to \mathsf{Ty}(\Delta)
\end{equation*}
is defined by $A\{\phi\} \stackrel{\mathrm{df. }}{=} A \circ \phi : \Delta \to \mathcal{PGD}$ for all $A \in \mathsf{Ty}(\Gamma)$, and the functions
\begin{equation*}
\textstyle \_\{\phi\}_A : \mathsf{ob}(\widehat{\prod}(\Gamma, A)) \to \mathsf{ob}(\widehat{\prod}(\Delta, A\{\phi\}))
\end{equation*}
are defined by $\tau \{\phi\}_A \stackrel{\mathrm{df. }}{=} \tau \bullet \phi$ for all $A \in \mathsf{Ty}(\Gamma)$, $\tau \in \widehat{\prod}(\Gamma, A)$. In this context, $\phi$ is called a {\bf context morphism} from $\Delta$ to $\Gamma$, and the above functions are called {\bf substitutions} of $\phi$.

\item $\mathcal{D}(I)$ is the discrete gamoid, where $I$ the empty game $(\emptyset, \emptyset, \emptyset, \{ \epsilon \})$.

\item For a context gamoid $\Gamma \in \mathcal{PGD}$ and a dependent gamoid $A \in \mathsf{Ty}(\Gamma)$, the comprehension $\Gamma.A \in \mathcal{PGD}$ of $A$ in $\Gamma$ is defined to be the dependent pair space $\widehat{\sum} (\Gamma, A)$.

\item The first projections $\mathsf{p}(A) : \widehat{\sum} (\Gamma, A) \to \Gamma$, where $A$ is a dependent gamoid on $\Gamma$, are the derelictions $\mathsf{der}_\Gamma$ ``up to the tags for disjoint union''.

\item The second projections $\mathsf{v}_A : \widehat{\prod} (\widehat{\sum}(\Gamma, A), A\{\mathsf{p}(A)\})$ are the products 
\begin{equation*}
\textstyle \& \{ \mathsf{v}_\sigma : \gamma \& \sigma \to \sigma \ \! | \ \! \gamma \& \sigma \in \widehat{\sum}(\Gamma, A) \}
\end{equation*}
where $\mathsf{v}_\sigma$ is the dereliction $\mathsf{der}_{\sigma}$ ``up to the tags for disjoint union''.

\item For a dependent gamoid $A$ on a context gamoid $\Gamma \in \mathcal{PGD}$, a morphism $\phi : \Delta \to \Gamma$, and an object $\kappa \in \widehat{\prod}(\Delta, A\{\phi\})$ in $\mathcal{PGD}$, we define the extension $\langle \phi, \kappa \rangle_A : \Delta \to \widehat{\sum} (\Gamma, A)$ as the paring $\phi \& \kappa$ equipped with the arrow-map $p \mapsto \phi_p \& \kappa_p$.
\end{itemize}
\prd
\end{definition}

Of course, we need to establish the following:

\begin{theorem}[Well-defined $\mathcal{PGD}$]
The structure $\mathcal{PGD}$ in fact forms a category with families.
\end{theorem}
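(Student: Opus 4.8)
The plan is to verify the defining axioms of a category with families one by one for the data listed in the definition, building on the preceding Proposition that the underlying category $\mathcal{PGD}$ of predicative gamoids and equality-preserving strategies is already well-defined. The key simplification is that, at the level of underlying predicative games and generalized strategies, every piece of data here --- the substitution $A \circ \phi$, the term substitution $\tau \bullet \phi$, the comprehension $\widehat{\sum}(\Gamma, A)$, the projections $\mathsf{p}(A)$ and $\mathsf{v}_A$, and the extension $\langle \phi, \kappa \rangle_A$ --- restricts to exactly the CwF structure on $\mathcal{IPG}$ established in Part~I \cite{yamada2016game}. Hence the \emph{one-dimensional} content of each CwF equation is inherited from there, and the genuinely new work is to check that each constructed strategy is equality-preserving (i.e.\ carries the required functor / arrow-map structure, including the coherence condition $\tau_p : Bp \bullet (\tau \bullet \sigma_1) =_{B\sigma_2} \! \tau \bullet \sigma_2$ for terms) and that each CwF equation holds additionally \emph{on identifications}. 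Throughout, the functoriality established in the well-definedness proofs of the three space constructions is precisely what makes these arrow-level checks go through; in effect one is verifying that $\mathcal{PGD}$ sits inside the groupoid CwF $\mathcal{GPD}$ of \cite{hofmann1998groupoid} as a subcategory with families.

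First I would treat the presheaf of types. Since $A\{\phi\} \stackrel{\mathrm{df. }}{=} A \circ \phi$ is literally functor composition and composition in $\mathcal{PGD}$ is functor composition (by Definition~\ref{PGD} and its well-definedness), the laws $A\{\mathsf{id}_\Gamma\} = A$ and $A\{\phi \bullet \psi\} = A\{\phi\}\{\psi\}$ follow from the unit and associativity laws for functors, and likewise for the term-substitution maps $\tau \mapsto \tau \bullet \phi$. Here I must also check that $\tau \bullet \phi$ again lies in $\widehat{\prod}(\Delta, A\{\phi\})$, i.e.\ that it is equality-preserving and satisfies the coherence condition; this follows from $\tau$ and $\phi$ being so, together with Proposition~\ref{EqualityPreservation}. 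Next, for the terminal object I would show that for every context gamoid $\Gamma$ there is a unique equality-preserving strategy $\Gamma \to \mathcal{D}(I)$: the underlying strategy is forced exactly as in $\mathcal{IPG}$, and since $\mathcal{D}(I)$ carries only the trivial (strict) identifications, the arrow-map is uniquely determined, giving uniqueness in $\mathcal{PGD}$ as well.

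The heart of the argument is the comprehension axiom. Using $\Gamma.A = \widehat{\sum}(\Gamma, A)$, I would first confirm that $\mathsf{p}(A)$ and $\mathsf{v}_A$ are well-defined morphisms/terms --- in particular that $\mathsf{v}_A$ is an object of $\widehat{\prod}(\widehat{\sum}(\Gamma, A), A\{\mathsf{p}(A)\})$, whose object-map sends $\gamma \& \sigma$ to $\sigma$ via $\mathsf{der}_\sigma$ and whose arrow-map sends $p \& q$ to $q$, and verify its coherence condition. Then, given $\phi : \Delta \to \Gamma$ and $\kappa \in \widehat{\prod}(\Delta, A\{\phi\})$, I would establish the two equations $\mathsf{p}(A) \bullet \langle \phi, \kappa \rangle_A = \phi$ and $\mathsf{v}_A\{\langle \phi, \kappa \rangle_A\} = \kappa$, together with the uniqueness of any morphism satisfying them, \emph{both} on strategies (inherited from Part~I) \emph{and} on identifications, where the arrow-map $p \mapsto \phi_p \& \kappa_p$ of the extension must be shown to be the unique choice compatible with the arrow-maps of $\mathsf{p}(A)$ and $\mathsf{v}_A$. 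I would close by checking the naturality (stability under substitution) of $\langle \_, \_ \rangle_\_$, splitting again into the object level, which reduces to the $\mathcal{IPG}$ computation, and the arrow level, which reduces to the definition of $\odot$ and the functoriality of the three constructions.

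I expect the main obstacle to be exactly this arrow-level verification of the comprehension axiom: showing that the extension's action on identifications, $p \mapsto \phi_p \& \kappa_p$, is \emph{forced} by the two comprehension equations together with functoriality, i.e.\ the two-dimensional uniqueness. This is where the transport functors $Bp$ and the composition $\odot$ in the dependent pair space interact, and care is needed to see that the second-component identification $\kappa_p$ correctly matches the arrow-map of $\mathsf{v}_A$ after substitution, using that identifications in $\widehat{\sum}(\Gamma, A)$ are pairs $p \& q$ with $q : Bp \bullet \tau =_{B\sigma'} \! \tau'$ rather than honest strategies between $\tau$ and $\tau'$. By contrast, the presheaf laws, the terminal-object property, and the one-dimensional parts of comprehension are routine once the functoriality of the dependent union, function space, and pair space constructions is in hand.
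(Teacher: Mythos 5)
Your proposal is correct and follows the same overall strategy as the paper: inherit the one-dimensional (strategy-level) content of every piece of data from the CwF $\mathcal{IPG}$ of Part~I, and concentrate the new work on checking that the arrow-maps (the two-dimensional data) of $\tau\{\phi\}_A = \tau \bullet \phi$ and of the extension $\langle\phi,\kappa\rangle_A = \phi \,\&\, \kappa$ are well-typed --- in particular that $(\tau\bullet\phi)_p = \tau_{\phi_p}$ has the required coherence type and that $\kappa_r : A\phi_r \bullet (\kappa\bullet\sigma_1) =_{A(\phi\bullet\sigma_2)} \kappa\bullet\sigma_2$ matches the form $q : Bp \bullet \tau =_{B\sigma'} \tau'$ of identifications in $\widehat{\sum}(\Gamma,A)$. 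These are exactly the two points the paper singles out as the non-routine ones, and your last paragraph correctly identifies the second.

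The one genuine divergence is in how you treat comprehension. The paper uses the equational presentation of CwFs from \cite{dybjer1996internal, hofmann1997syntax}: it verifies the four identities Cons-L, Cons-R, Cons-Nat and Cons-Id (the last being $\langle\mathsf{p}(A),\mathsf{v}_A\rangle_A = \mathsf{id}_{\Gamma.A}$), each of which is a one-line computation, and never proves or needs a uniqueness statement. You instead adopt the universal-property formulation, proving existence of the extension together with the \emph{uniqueness} of any morphism satisfying the two projection equations, and you flag the two-dimensional uniqueness as the main obstacle. That uniqueness does hold (the projection equations, read as equalities of functors, pin down both components of the arrow-map of any candidate $\theta$), and from existence plus uniqueness both Cons-Nat and Cons-Id follow formally, so your route is sound and even slightly stronger; but it is worth knowing that in the presentation the paper targets this step is absent, and the separate naturality check you propose at the end becomes redundant once uniqueness is in hand. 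Two minor points: the paper's definition also omits an explicit verification of the terminal-object property for $\mathcal{D}(I)$, which you supply (harmlessly); and the well-definedness of $\tau\bullet\phi$ as an object of $\widehat{\prod}(\Delta,A\{\phi\})$ follows from the direct computation of $(\tau\bullet\phi)_p$ rather than from Proposition~\ref{EqualityPreservation}, which concerns identifications \emph{between} dependent functions rather than the coherence condition of a single one.
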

\begin{proof}
It is almost straightforward to see that each component is well-defined except for the functions between terms and extensions. For the functions between terms, let $\phi : \Delta \to \Gamma$ be a morphism in $\mathcal{PGD}$, $A : \Gamma \to \mathcal{PGD}$ a dependent gamoid, and $\tau \in \widehat{\prod}(\Gamma,A)$. It has been shown in \cite{yamada2016game} that $\tau \bullet \phi$ is a well-defined strategy on the game $\widehat{\prod}(\Delta, A\{\phi\})$. Moreover, for any $p : \sigma =_\Delta \! \sigma'$, we have $(\tau \bullet \phi)_p = \tau_{\phi_p} : A\phi_p \bullet (\tau \bullet \phi \bullet \sigma) =_{A(\phi \bullet \sigma')} \! \tau \bullet \phi \bullet \sigma'$, i.e., $(\tau \bullet \phi)_p : A\{\phi\}(p) \bullet ((\tau \bullet \phi) \bullet \sigma) =_{A\{\phi\}(\sigma')} \! (\tau \bullet \phi) \bullet \sigma'$. Therefore $\tau \bullet \phi$ is a well-defined object of $\widehat{\prod}(\Delta, A\{\phi\})$, showing that the function $\_\{\phi\}_A : \mathsf{ob}(\widehat{\prod}(\Gamma, A)) \to \mathsf{ob}(\widehat{\prod}(\Delta, A\{\phi\}))$ is well-defined.

Next, we consider extensions. Let $\kappa \in \widehat{\prod}(\Delta, A\{\phi\})$. It has been shown in \cite{yamada2016game} that the object-map of the paring $\phi \& \kappa : \Delta \to \widehat{\sum}(\Gamma, A)$ is well-defined. For the arrow-map, let $r : \sigma_1 =_\Delta \! \sigma_2$. We then have 
\begin{align*}
(\phi \& \kappa)_r  = \phi_r \& \kappa_r : (\phi \bullet \sigma_1) \& (\kappa \bullet \sigma_1) =_{\widehat{\sum}(\Gamma, A)} \! (\phi \bullet \sigma_2) \& (\kappa \bullet \sigma_2)  
\end{align*}
because $\phi_r : \phi \bullet \sigma_1 =_\Gamma \! \phi \bullet \sigma_2$ and $\kappa_r : A\phi_r \bullet (\kappa \bullet \sigma_1) =_{A(\phi \bullet \sigma_2)} \! \kappa \bullet \sigma_2$.
Thus, the arrow-map is well-defined as well.

Finally, we verify the required equations.
\begin{itemize}

\item {\bfseries \sffamily Ty-Id.} For a context gamoid $\Gamma \in \mathcal{PGD}$ and a dependent gamoid $A \in \mathsf{Ty}(\Gamma)$,
\begin{equation*}
A \{ \textsf{id}_\Gamma \} = A \circ \textsf{id}_\Gamma = A.
\end{equation*}

\item {\bfseries \sffamily Ty-Comp.} Additionally, for any composable morphisms $\phi : \Delta \to \Gamma$, $\psi : \Theta \to \Delta$ in $\mathcal{PGD}$,
\begin{equation*}
A \{ \phi \circ \psi \} = A \circ (\phi \circ \psi) = (A \circ \phi) \circ \psi = A \{ \phi \} \{ \psi \}.
\end{equation*}

\item {\bfseries \sffamily Tm-Id.} Moreover, for any object $\tau \in \widehat{\prod}(\Gamma, A)$, 
\begin{equation*}
\tau \{ \textsf{id}_\Gamma \} = \tau \bullet \textsf{id}_\Gamma = \tau.
\end{equation*}

\item {\bfseries \sffamily Tm-Comp.} Under the same assumption,
\begin{equation*}
\tau \{ \phi \circ \psi \} = \tau \bullet (\phi \bullet \psi) = (\tau \bullet \phi) \bullet \psi = \tau \{ \phi \} \{ \psi \}.
\end{equation*}

\item {\bfseries \sffamily Cons-L.} By the definition, we clearly have $\mathsf{p}(A) \bullet \langle \phi, \tau \rangle_A = \phi$.

\item {\bfseries \sffamily Cons-R.} Also, it is immediate that $\mathsf{v}_A \{ \langle \phi, \tau \rangle_A \} = \mathsf{v}_{A} \bullet (\phi \& \tau) = \tau$.

\item {\bfseries \sffamily Cons-Nat.} $\langle \phi, \tau \rangle_A \bullet \psi = (\phi \& \tau) \bullet \psi = (\phi \bullet \psi) \& (\tau \bullet \psi) = \langle \phi \circ \psi, \tau \{ \psi \} \rangle_A$.

\item {\bfseries \sffamily Cons-Id.} Finally, it is clear that $\langle \mathsf{p}(A), \mathsf{v}_A \rangle_A = \mathsf{p}(A) \& \mathsf{v}_A = \mathsf{id}_{\widehat{\sum}(\Gamma, A)} =  \textsf{id}_{\Gamma . A}$.

\end{itemize}

\end{proof}

\subsection{Game-theoretic Type Formers}
We proceed to equip the CwF $\mathcal{PGD}$ with \emph{semantic type formers} (for the definition, see \cite{hofmann1997syntax} or \cite{yamada2016game}), which are categorical structures to interpret specific types such as $\prod$-, $\sum$- and Id-types as well as universes.

\begin{notation}
For brevity, if we have consecutive subscripts such as $(q_p)_r$, then we usually abbreviate it as $q_{p, r}$. We apply the same principle for more than two consecutive subscripts in the obvious way.
\end{notation}

\subsubsection{Game-theoretic $\prod$-types}
We first interpret $\prod$-types, for which we need the following lemma:
\begin{lemma}[Evaluation functor \textsf{ev}]
\label{EV}
Let $B : A \to \mathcal{PGD}$ be a dependent gamoid. Then, we have an object $\mathsf{ev}_{A, B} \in \widehat{\prod}(\widehat{\prod}(A,B) \& A, B\{\mathsf{\mathsf{v}_A}\})$ which is the strategy $\& \{ \zeta \& \sigma \! \rightleftharpoons \! \zeta \bullet \sigma \ \! | \ \! \zeta \& \sigma \in \widehat{\prod}(A,B) \& A \}$ equipped with the arrow-map $(q \& p : \zeta \& \sigma =_{\widehat{\prod}(A,B) \& A} \! \zeta' \& \sigma') \mapsto q_p : Bp \bullet (\zeta \bullet \sigma) =_{B\sigma'} \! \zeta' \bullet \sigma'$.
\end{lemma}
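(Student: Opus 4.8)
The plan is to show that the object-map $\zeta \& \sigma \mapsto \zeta \bullet \sigma$ together with the arrow-map $(q \& p) \mapsto q_p$ constitute a well-defined object of the dependent function space, i.e.\ an equality-preserving strategy whose arrow-map values are identifications of the type prescribed in the definition of $\widehat{\prod}$. The underlying strategy $\& \{ \zeta \& \sigma \! \rightleftharpoons \! \zeta \bullet \sigma \mid \zeta \& \sigma \in \widehat{\prod}(A,B) \& A \}$ is already known to be a well-defined strategy on the dependent function game from \cite{yamada2016game} (it is the usual evaluation strategy $\mathsf{ev}$), so the entire content of the lemma is the well-typedness and functoriality of the arrow-map. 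Equivalently, I must check that $\zeta \& \sigma \mapsto \zeta \bullet \sigma$ and $(q \& p) \mapsto q_p$ form a functor $\widehat{\prod}(A,B) \& A \to \uplus(B\{\mathsf{v}_A\})$ that is a section of the projection.

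First I would unfold the substituted dependent gamoid $B\{\mathsf{v}_A\} = B \circ \mathsf{v}_A$: on an object $\zeta \& \sigma$ it returns $B\sigma$, and on an identification $q \& p$ it returns $Bp$, since the second projection $\mathsf{v}_A$ discards the first components $\zeta$ and $q$ and keeps $\sigma$ and $p$. Since $\mathsf{ev}_{A,B} \bullet (\zeta \& \sigma) = \zeta \bullet \sigma$, the type required of the arrow-map value is exactly $(\mathsf{ev}_{A,B})_{q \& p} : Bp \bullet (\zeta \bullet \sigma) =_{B\sigma'} \zeta' \bullet \sigma'$. This is precisely the type of $q_p$ supplied by Proposition~\ref{EqualityPreservation}, so well-typedness of the arrow-map is immediate once this bookkeeping is carried out.

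Next I would verify functoriality. Preservation of identities amounts to $(\mathsf{id}_\zeta)_{\mathsf{id}_\sigma} = \mathsf{id}_{\zeta \bullet \sigma}$, which follows by expanding $q_p = \zeta'_p \odot q_\sigma$ at $q = \mathsf{id}_\zeta$ and $p = \mathsf{id}_\sigma$ and using that $\zeta$, being equality-preserving, sends $\mathsf{id}_\sigma$ to $\mathsf{id}_{\zeta \bullet \sigma}$, together with $(\mathsf{id}_\zeta)_\sigma = \mathsf{id}_{\zeta \bullet \sigma}$ and the unit law for $\odot$. Preservation of composition is the substantive requirement, and I expect it to be the main obstacle. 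Writing composition in the product gamoid as $(q' \& p') \circ (q \& p) = (q' \circ q) \& (p' \circ p)$, I must establish the interchange identity $(q' \circ q)_{p' \circ p} = q'_{p'} \odot q_p$ in $\uplus(B\{\mathsf{v}_A\})$. The argument is to expand both sides using the two equal presentations $q_p = \zeta'_p \odot q_\sigma = q_{\sigma'} \odot \zeta_p$ of Proposition~\ref{EqualityPreservation}, unfold $\odot$ through its defining formula $q' \odot q = q' \circ (Bp')_q$, and collapse the resulting composite by invoking the naturality of the transformations $q, q'$ and the functoriality of $B$ and of the dependent functions $\zeta, \zeta', \zeta''$. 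This is exactly the step for which the naturality condition on identifications in dependent function spaces was imposed, as flagged in the remark preceding the lemma.

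Having checked well-typedness and functoriality, I would conclude that $\mathsf{ev}_{A,B}$ is an equality-preserving strategy whose arrow-map values are identifications of the required form, hence a genuine object of $\widehat{\prod}(\widehat{\prod}(A,B) \& A, B\{\mathsf{v}_A\})$. The only genuinely calculational point is the interchange identity above; every other part is either quoted from \cite{yamada2016game} or a direct application of Proposition~\ref{EqualityPreservation}.
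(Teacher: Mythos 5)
Your proposal is correct and follows essentially the same route as the paper: well-definedness of the underlying strategy is quoted from Part I, well-typedness of the arrow-map is read off from Proposition~\ref{EqualityPreservation}, and functoriality reduces to the interchange identity $(q' \circ q)_{p' \circ p} = q'_{p'} \circ q_p$, which the paper establishes by exactly the computation you sketch (expanding $q_p$ via its two presentations and applying naturality of $q, q'$ together with functoriality of the dependent functions). The only difference is that the paper writes out this chain of equalities explicitly, whereas you describe it; the ingredients you name are precisely the ones used.
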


\begin{proof}
First, $\mathsf{ev}_{A,B}$ is clearly a well-defined strategy on the game $\widehat{\prod}(\widehat{\prod}(A,B) \& A, B\{\mathsf{\mathsf{v}_A}\})$, and the arrow-map is well-defined by Proposition \ref{EqualityPreservation}. It remains to verify functoriality of $\mathsf{ev}$. Let $q : \zeta =_{\widehat{\prod}(A,B)} \! \zeta'$, $q' : \zeta' =_{\widehat{\prod}(A,B)} \! \zeta''$, $p : \sigma =_A \! \sigma'$, $p' : \sigma' =_A \! \sigma''$. By naturality of $q, q'$, we have:
\begin{align*}
(\mathsf{ev}_{A, B})_{(q' \! \circ q) \& (p' \! \circ p)} &= (q' \! \circ q)_{\sigma''} \! \circ \tau_{p' \! \circ p} \\
&= (q'_{\sigma''} \! \circ q_{\sigma''}) \circ (\tau_{p'} \! \circ \tau_p) \\
&= q'_{\sigma''} \! \circ (q_{\sigma''} \! \circ \tau_{p'}) \circ \tau_p \\
&= q'_{\sigma''} \! \circ (\tau'_{p'} \! \circ q_{\sigma'}) \circ \tau_p \\
&= (q'_{\sigma''} \! \circ \tau'_{p'}) \circ (q_{\sigma'} \! \circ \tau_p) \\
&= q'_{p'} \! \circ q_p \\
&= (\mathsf{ev}_{A,B})_{q' \& p'} \! \circ (\mathsf{ev}_{A,B})_{q \& p}.
\end{align*}
Also, $(\mathsf{ev}_{A,B})_{\mathsf{id}_{\zeta \& \sigma}} = (\mathsf{ev}_{A,B})_{\mathsf{id}_\zeta \& \mathsf{id}_\sigma} = \mathsf{id}_{\zeta, \mathsf{id}_\sigma} \! = \mathsf{id}_{\zeta, \sigma} \! \circ \zeta_{\mathsf{id}_\sigma} \! = \mathsf{id}_{\zeta \bullet \sigma} \! \circ \mathsf{id}_{\zeta \bullet \sigma} \! = \mathsf{id}_{\zeta \bullet \sigma} \! = \mathsf{id}_{\mathsf{ev}_{A,B} \bullet (\zeta \& \sigma)}$.
\end{proof}

We call $\mathsf{ev}_{A,B}$ the {\bf evaluation functor}.
As mentioned earlier, we required the naturality condition on identifications in dependent function spaces mainly in order to establish this lemma. We shall use the evaluation functors to interpret $\prod$-Elim rule below.

\begin{remark}
The evaluation functors are a generalization of the \emph{evaluation strategies} introduced in \cite{abramsky2000full,mccusker1998games}.
\end{remark}

We are now ready to establish:
\begin{proposition}[$\mathcal{PGD}$ supports $\prod$-types]
The CwF $\mathcal{PGD}$ supports $\prod$-types.
\end{proposition}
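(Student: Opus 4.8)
The plan is to verify that $\mathcal{PGD}$ supports $\prod$-types by checking each component of the standard semantic $\prod$-type structure: the type former, the abstraction (introduction) operation, the application (elimination) via the evaluation functor, and the required $\beta$- and naturality equations. Concretely, given a context gamoid $\Gamma$, a dependent gamoid $A \in \mathsf{Ty}(\Gamma)$, and a dependent gamoid $B \in \mathsf{Ty}(\Gamma.A) = \mathsf{Ty}(\widehat{\sum}(\Gamma, A))$, I would first define the interpreting type $\prod(A,B) \in \mathsf{Ty}(\Gamma)$ as the dependent gamoid sending each $\gamma : \Gamma$ to the dependent function space $\widehat{\prod}(A\gamma, B_\gamma)$, where $B_\gamma$ is the restriction of $B$ to the fibre over $\gamma$, and sending each identification $p : \gamma =_\Gamma \gamma'$ to the induced isomorphism functor on these function spaces obtained from the functorial action of $A$ and $B$ on $p$. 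The main content is to check that this assignment is itself a functor $\Gamma \to \mathcal{PGD}$, i.e. that it respects composition and identities of identifications; this reduces to functoriality of $A$ and $B$ together with the way identifications in $\widehat{\prod}$ are defined point-wise.

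Next I would define the abstraction operation $\lambda$, sending a term $\theta \in \mathsf{Tm}(\Gamma.A, B) = \mathsf{ob}(\widehat{\prod}(\Gamma.A, B))$ to a term $\lambda(\theta) \in \mathsf{Tm}(\Gamma, \prod(A,B))$. On objects this is the familiar currying: $\lambda(\theta)$ sends $\gamma : \Gamma$ to the dependent function $\sigma \mapsto \theta \bullet (\gamma \& \sigma)$, packaged as a strategy via the $\&$ construction. The point needing care is the arrow-map: $\lambda(\theta)$ must be equality-preserving and must itself carry the functorial structure required of an object of $\widehat{\prod}(\Gamma, \prod(A,B))$, and this is precisely where the arrow-map data $\theta_{p \& q}$ of $\theta$ gets reorganized into the point-wise natural transformations that the definition of identifications in $\widehat{\prod}$ demands. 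I expect to lean on the naturality built into Proposition~\ref{EqualityPreservation} here. Dually, application is handled by substituting into the evaluation functor $\mathsf{ev}_{A,B}$ from Lemma~\ref{EV}: for a term $f \in \mathsf{Tm}(\Gamma, \prod(A,B))$ and appropriate context data, $\mathsf{app}(f)$ is obtained as a composite using $\mathsf{ev}$, whose functoriality is exactly what Lemma~\ref{EV} guarantees.

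Finally I would verify the equations that the definition of ``supports $\prod$-types'' requires: the $\beta$-rule $\mathsf{app}(\lambda(\theta)) = \theta$, the substitution-naturality of $\prod$, $\lambda$, and $\mathsf{app}$ under context morphisms $\phi : \Delta \to \Gamma$, and (if the formulation demands it) the $\eta$-rule. The $\beta$-equation and the naturality squares on \emph{objects} reduce to the corresponding facts already established for the underlying predicative games in \cite{yamada2016game}, since the object-maps here are literally those of the previous paper; so the genuinely new work is to check that these equations also hold on the \emph{arrow-maps}, i.e. that both sides induce the same action on identifications. I expect the main obstacle to be exactly this bookkeeping on arrow-maps, particularly confirming that currying and the evaluation functor interact correctly with the point-wise, naturality-constrained definition of identifications in $\widehat{\prod}$ and with the composition law $\odot$ in dependent unions; once the arrow-map of $\lambda(\theta)$ is correctly identified, the $\beta$-equation on arrows should follow from the computation of $(\mathsf{ev}_{A,B})_{q \& p}$ given in Lemma~\ref{EV}, and the naturality equations should follow by the same pattern as the object-level verifications combined with functoriality of $A$ and $B$.
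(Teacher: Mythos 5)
Your proposal follows essentially the same route as the paper: the same object-map $\gamma \mapsto \widehat{\prod}(A\gamma, B_\gamma)$ with transport along $p$ via the functorial actions of $A$ and $B$ for $\prod$-Form, currying with the arrow-map inherited from $\iota_{p \& q}$ for $\prod$-Intro, application as a composite with the evaluation functor of Lemma~\ref{EV} for $\prod$-Elim, and the correct observation that the object-level equations are inherited from Part~I so that the genuinely new work is the bookkeeping on arrow-maps. This matches the paper's proof in both decomposition and key ingredients.
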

\begin{proof}
Let $\Gamma \in \mathcal{PGD}$, $A \in \mathsf{Ty}(\Gamma)$, $B \in \mathsf{Ty}(\widehat{\sum}(\Gamma, A))$ in $\mathcal{PGD}$.
\begin{itemize}

\item {\bfseries \sffamily $\bm{\prod}$-Form.} We need to generalize the construction of dependent function spaces, as $A$ itself is a dependent gamoid. Then we define the dependent gamoid $\prod (A, B) : \Gamma \to \mathcal{PGD}$ by
\begin{align*}
(\gamma : \Gamma) &\mapsto \textstyle \widehat{\prod} (A\gamma, B_\gamma) \\
(p : \gamma =_\Gamma \! \gamma') &\mapsto \textstyle p^{\widehat{\prod}}_{A, B} : \widehat{\prod}(A\gamma, B_\gamma) \to \widehat{\prod}(A\gamma', B_{\gamma'})
\end{align*}
where the dependent gamoid $B_\gamma : A\gamma \to \mathcal{PGD}$ is defined by $B_\gamma(\sigma) \stackrel{\mathrm{df. }}{=} B(\gamma \& \sigma)$ for all $\sigma : A\gamma$, and $B_\gamma(q) \stackrel{\mathrm{df. }}{=} B(\mathsf{id}_\gamma \& q) : B(\gamma \& \sigma_1) \to B(\gamma \& \sigma_2)$ for all $\sigma_1, \sigma_2 : A\gamma$, $q : \sigma_1 =_{A\gamma} \! \sigma_2$, and the morphism $\textstyle p^{\widehat{\prod}}_{A, B} : \widehat{\prod}(A\gamma, B_\gamma) \to \widehat{\prod}(A\gamma', B_{\gamma'})$ in $\mathcal{PGD}$ is defined to be the strategy
\begin{equation*}
\textstyle p^{\widehat{\prod}}_{A, B} \stackrel{\mathrm{df. }}{=} \& \{ \tau \! \rightleftharpoons \! B (p \& \mathsf{id}_{A\gamma'}) \bullet \tau \bullet Ap^{-1} \ \! | \ \! \tau \in \widehat{\prod}(A\gamma, B_\gamma) \}
\end{equation*}
equipped with the arrow-map
\begin{equation*}
(q : \tau_1 =_{\widehat{\prod}(A\gamma, B_\gamma)} \! \tau_2) \mapsto \& \{ B(p \& \mathsf{id}_{A\gamma'})_{q_{Ap^{-1} \bullet \sigma'}} \ \! | \ \! \sigma' : A\gamma' \}.
\end{equation*}
\begin{notation}
We often omit the subscripts $A, B$ in $p^{\widehat{\prod}}_{A, B}$.
\end{notation}
In fact, for any $\tau \in \widehat{\prod}(A\gamma, B_\gamma)$, we have $B(p \& \mathsf{id}_{A\gamma'}) \bullet \tau \bullet Ap^{-1} : A\gamma' \to \uplus B$ that satisfies $(B(p \& \mathsf{id}_{A\gamma'}) \bullet \tau \bullet Ap^{-1}) \bullet \sigma' : B(\gamma' \& \sigma')$ for any $\sigma' : A\gamma'$, and $(p^{\widehat{\prod}} \bullet \tau)_s = (B(p \& \mathsf{id}_{A\gamma'}) \bullet \tau \bullet Ap^{-1})_s : B(p \& \mathsf{id}_{A\gamma'}) \bullet B(\mathsf{id}_\gamma \& (Ap^{-1})_s) \bullet (\tau \bullet Ap^{-1} \bullet \sigma'_1) =_{B(\gamma' \& \sigma'_2)} \! B(p \& \mathsf{id}_{A\gamma'}) \bullet \tau \bullet Ap^{-1} \bullet \sigma'_2$, which implies $(p^{\widehat{\prod}} \bullet \tau)_s : B_{\gamma'}(s) \bullet (p^{\widehat{\prod}} \bullet \tau \bullet \sigma'_1) =_{B_{\gamma'}(\sigma'_2)} \! p^{\widehat{\prod}} \bullet \tau \bullet \sigma'_2$ for all $\sigma'_1, \sigma'_2 : A\gamma'$, $s : \sigma'_1 =_{A\gamma'} \! \sigma'_2$, establishing $p^{\widehat{\prod}} \bullet \tau \in \widehat{\prod}(A\gamma', B_{\gamma'})$. Also for any identification $q : \tau_1 =_{\widehat{\prod}(A\gamma, B_\gamma)} \! \tau_2$, we have
\begin{equation*}
B(p \& \mathsf{id}_{A\gamma'})_{q_{Ap^{-1} \bullet \sigma'}} : B(p \& \mathsf{id}_{A\gamma'}) \bullet \tau_1 \bullet Ap^{-1} \bullet \sigma' =_{B(\gamma' \& \sigma')} B(p \& \mathsf{id}_{A\gamma'}) \bullet \tau_2 \bullet Ap^{-1} \bullet \sigma'
\end{equation*}
i.e., $p^{\widehat{\prod}}_{q, \sigma'} \! : p^{\widehat{\prod}} \bullet \tau_1 \bullet \sigma' =_{B_{\gamma'}(\sigma')} p^{\widehat{\prod}} \bullet \tau_2 \bullet \sigma'$ for all $\sigma' : A\gamma'$; also, naturality of $p^{\widehat{\prod}}_q$ is immediate from that of $q$. Thus, we may conclude that $p^{\widehat{\prod}}_q : p^{\widehat{\prod}} \bullet \tau_1 =_{\widehat{\prod}(A\gamma', B_{\gamma'})} \! p^{\widehat{\prod}}\bullet \tau_2$. Moreover, it is straightforward to see functoriality of $p^{\widehat{\prod}}$. Therefore, we have shown that $p^{\widehat{\prod}}$ is a well-defined morphism $\widehat{\prod}(A\gamma, B_\gamma) \to \widehat{\prod}(A\gamma', B_{\gamma'})$ in $\mathcal{PGD}$.
\begin{remark}
If $\Gamma = \mathcal{D}(I)$, i.e., the empty gamoid, then $\prod (A, B)$ is essentially $\widehat{\prod}(A, B)$; thus $\prod$ is a generalization of $\widehat{\prod}$, and we call $\prod (A, B)$ the {\bf dependent function space} of $B$ over $A$ 
as well.
\end{remark}
 
\item {\bfseries \sffamily $\bm{\prod}$-Intro.} It has been shown in \cite{yamada2016game} that strategies on $\widehat{\prod}(\widehat{\sum} (\Gamma, A), B)$ and strategies on $\widehat{\prod} (\Gamma, \prod (A, B))$ are corresponding ``up to the tags for disjoint union''. Thus, for each strategy $\iota : \widehat{\prod}(\widehat{\sum} (\Gamma, A), B)$, we have the corresponding strategy $\lambda_{A,B}(\iota) : \widehat{\prod} (\Gamma, \prod (A, B))$. It remains to equip $\lambda_{A, B} (\iota)$ with an arrow-map, i.e., we need to establish an identification
\begin{equation*}
\lambda_{A,B}(\iota)_p : B(p \& \mathsf{id}_{A\gamma_2}) \bullet (\lambda_{A,B}(\iota) \bullet \gamma_1) \bullet Ap^{-1} =_{\widehat{\prod}(A\gamma_2, B_{\gamma_2})} \! \lambda_{A,B}(\iota) \bullet \gamma_2
\end{equation*}
for each  $\gamma_1, \gamma_2 : \Gamma$, $p : \gamma_1 =_\Gamma \! \gamma_2$ that makes $\lambda_{A,B}(\iota)$ a functor. By the definition of identifications in dependent function spaces, it suffices to construct, for each $\sigma_2 : A\gamma_2$, an identification
\begin{equation}
\label{LAMBDA}
\lambda_{A,B}(\iota)_{p,\sigma_2} : B(p \& \mathsf{id}_{A\gamma_2}) \bullet (\lambda_{A,B}(\iota) \bullet \gamma_1) \bullet Ap^{-1} \bullet \sigma_2 =_{B(\gamma_2 \& \sigma_2)} \! \lambda_{A,B}(\iota) \bullet \gamma_2 \bullet \sigma_2
\end{equation}
that is natural in $\sigma_2$ and functorial with respect to $p$. Then we have:
\begin{equation}
\label{TAU}
\iota_{p \& \mathsf{id}_{\sigma_2}} : B(p \& \mathsf{id}_{A\gamma_2}) \bullet (\iota \bullet (\gamma_1 \& (Ap^{-1} \bullet \sigma_2))) =_{B(\gamma_2 \& \sigma_2)} \! \iota \bullet (\gamma_2 \& \sigma_2).
\end{equation}
which is clearly functorial with respect to $p$. Also, it is natural in $\sigma_2$ by functoriality of $\iota$, in which note that compositions are made in the dependent union $\uplus B$. Moreover, the games in (\ref{LAMBDA}), (\ref{TAU}) are actually the same. Hence we take $\lambda_{A,B}(\iota)_{p,\sigma_2} \stackrel{\mathrm{df. }}{=} \iota_{p \& \mathsf{id}_{\sigma_2}}$. In addition, it is easy to see that $\lambda_{A,B}(\iota)_{p,q} = \iota_{p \& q}$ for any $\sigma_2, \widehat{\sigma}_2 : A\gamma_2$, $q : \sigma_2 =_{A\gamma_2} \! \widehat{\sigma}_2$ again by functoriality of $\iota$.

\begin{notation}
We often omit the subscripts $A, B$ in $\lambda_{A, B}(\iota)$.
\end{notation}

\item {\bfseries \sffamily $\bm{\prod}$-Elim.} For any $\kappa \in \widehat{\prod} (\Gamma, \prod (A, B))$ and $\tau \in \widehat{\prod} (\Gamma, A)$, we simply define:
\begin{equation*}
\textstyle \mathsf{App}_{A, B} (\kappa, \tau) \stackrel{\mathrm{df. }}{=} \mathsf{ev}_{A, B} \bullet (\kappa \& \tau) 
\end{equation*}
where $\mathsf{ev}_{A,B} \stackrel{\mathrm{df. }}{=} \& \{ \mathsf{ev}_{A\gamma, B_\gamma} \ \! | \ \! \gamma : \Gamma \} : \& \{ \widehat{\prod}(A\gamma, B_{\gamma}) \& A\gamma \to B_{\gamma}\{ \mathsf{v}_{A}\} \ \! | \ \! \gamma : \Gamma \}$. It is then easy to see that $\mathsf{App}_{A, B} (\kappa, \tau)$ is a strategy on the game $\widehat{\prod}(\Gamma, B\{\overline{\tau}\})$, where $\overline{\tau} \stackrel{\mathrm{df. }}{=} \mathsf{id}_\Gamma \& \tau$.
Also for the arrow-map, for any $\gamma, \gamma' : \Gamma$, $p : \gamma =_{\Gamma} \! \gamma'$, we have an identification
\begin{equation*}
\textstyle \kappa_p : \prod(A,B)(p) \bullet (\kappa \bullet \gamma) =_{\widehat{\prod}(A\gamma', B_{\gamma'})} \! \kappa \bullet \gamma'
\end{equation*}
which in turn induces an identification
\begin{equation*}
\kappa_{p, \tau_p} : B(\textsf{id}_{\gamma'} \& \tau_p)\bullet (p^{\widehat{\prod}} \bullet (\kappa \bullet \gamma) \bullet Ap(\tau \bullet \gamma)) =_{B( \gamma' \& (\tau \bullet \gamma'))} \kappa \bullet \gamma' \bullet \tau \bullet \gamma'
\end{equation*}
where $\kappa_{p, \tau_p} = \mathsf{App}_{A, B} (\kappa, \tau)_p$, and
\begin{align*}
& B(\textsf{id}_{\gamma'} \& \tau_p) \bullet (p^{\widehat{\prod}} \bullet (\kappa \bullet\gamma) \bullet Ap \bullet (\tau \bullet \gamma)) \\
= \ & B(\textsf{id}_{\gamma'} \& \tau_p) \bullet B(p \& \textsf{id}_{Ap \bullet (\tau \bullet \gamma)}) \bullet \kappa \bullet \gamma \bullet Ap^{-1} \bullet Ap \bullet (\tau \bullet \gamma) \\
= \ & B(p \& \tau_p) \bullet (\kappa \bullet \gamma \bullet \tau \bullet \gamma)
\end{align*}
establishing an identification
\begin{equation*}
\mathsf{App}_{A, B} (\kappa, \tau)_p : B(p \& \tau_p) \bullet (\mathsf{App}_{A, B} (\kappa, \tau) \bullet \gamma) =_{B(\gamma' \& (\tau \bullet \gamma'))} \! \mathsf{App}_{A, B} (\kappa, \tau) \bullet \gamma'.
\end{equation*}
Moreover, $\mathsf{App}_{A,B}(\kappa, \tau)$ is functorial by Lemma \ref{EV} (i.e., because it is a composition of functors).
Hence we may conclude that $\mathsf{App}_{A, B} (\kappa, \tau) \in \widehat{\prod} (\Gamma, B\{ \overline{\tau} \})$.

\begin{notation}
We often omit the subscripts $A, B$ in $\mathsf{App}_{A, B}(\kappa, \tau)$.
\end{notation}

\item {\bfseries \sffamily $\bm{\prod}$-Comp.} Let $\iota \in \widehat{\prod}(\widehat{\sum}(\Gamma, A), B)$, $\tau \in \widehat{\prod}(\Gamma, A)$. First, it is straightforward to see that $\mathsf{App} (\lambda (\iota) , \tau), \iota \{ \overline{\tau} \}$ are the same strategies.
And for the arrow-maps, we have
\begin{align*}
\textstyle \mathsf{App} (\lambda (\iota) , \tau)_p &= \lambda (\iota)_{p, \tau_p} \\
&= \iota_{p \& \tau_p} \\
&= \iota_{(\mathsf{id}_\Gamma \& \tau)_p} \\
&= \iota \{ \overline{\tau} \}_p.
\end{align*}

Hence we may conclude that $\mathsf{App} (\lambda (\sigma) , \tau) = \sigma \{ \overline{\tau} \}$.

\item {\bfseries \sffamily $\bm{\prod}$-Subst.} Moreover, for any $\Delta \in \mathcal{PGD}$ and $\phi : \Delta \to \Gamma$ in $\mathcal{PGD}$, we have, for the object-map,
\begin{align*}
\textstyle \prod (A, B) \{ \phi \}(\delta) &= \textstyle  \widehat{\prod} (A(\phi \bullet \delta), B_{\phi \bullet \delta})  \\
&= \textstyle \widehat{\prod} (A\{\phi\}(\delta), B\{\phi^+\}_\delta) \\
&= \textstyle \prod (A\{\phi\}, B\{\phi^+\})(\delta)
\end{align*}
for all $\delta : \Delta$, where $\phi^+  \stackrel{\mathrm{df. }}{=} (\phi \bullet \mathsf{p}(A \{ \phi \})) \& \mathsf{v}_{A\{\phi\}} : \widehat{\sum} (\Delta,  A\{\phi\}) \to \widehat{\sum} (\Gamma, A)$. Note that the second equation holds because
\begin{align*}
B\{\phi^+\}_\delta (\psi) &= B\{\phi^+\} (\delta \& \psi) \\ 
&= B (((\phi \bullet \mathsf{p}(A \{ \phi \})) \& \mathsf{v}_{A\{\phi\}}) \bullet (\delta \& \psi)) \\
&= B ( (\phi \bullet \mathsf{p}(A \{ \phi \}) \bullet (\delta \& \psi)) \& (\mathsf{v}_{A\{\phi\}} \bullet (\delta \& \psi))) \\
&= B ( (\phi \bullet \delta) \& \psi ) \\
&= B_{\phi \bullet \delta} (\psi)
\end{align*}
for all $\psi : A(\phi \bullet \gamma)$, and similarly $B\{\phi^+\}_\delta (q) = B_{\phi \bullet \delta} (q)$ for all identifications $q$ in $A(\phi \bullet \gamma)$.
And for the arrow-map, it is not hard to see, for any $p : \delta=_\Delta \! \delta'$, that
\begin{align*}
\textstyle \prod (A, B) \{ \phi \}(p) &= \textstyle \prod (A, B)(\phi_p) \\
&= \textstyle \& \{ \tau \! \rightleftharpoons \! B(\phi_p \& \mathsf{id}_{A (\phi \bullet \delta')}) \bullet \tau \bullet A\phi_p^{-1} \ \! | \ \! \tau \in \widehat{\prod}(A(\phi \bullet \delta), B_{\phi \bullet \delta}) \} \\
&= \textstyle \& \{ \tau \! \rightleftharpoons \! B\{\phi^+\}(p \& \mathsf{id}_{A\{\phi\}(\delta')}) \bullet \tau \bullet A\{\phi\}_{p^{-1}} \ \! | \ \! \tau \in \widehat{\prod}(A\{\phi\}(\delta), B\{\phi^+\}_{\delta}) \} \\
&= \textstyle \prod(A\{\phi\}, B\{ \phi^+ \})(p).
\end{align*}

\item {\bfseries \sffamily $\bm{\lambda}$-Subst.} For any object $\iota \in \widehat{\prod} (\widehat{\sum} (\Gamma, A), B)$, it is easy to see that $\lambda (\iota) \{ \phi \}, \lambda (\iota \{ \phi^+ \})$ are the same strategies. For the arrow-maps, for any $\delta, \delta' : \Delta$, $p : \delta =_\Delta \! \delta'$, $\tau, \tau' : \prod(A, B)\{\phi\}$, $q : \prod(A, B)\{\phi\}(p) \bullet \tau \bullet \delta =_{\widehat{\prod}(A(\phi \bullet \delta'), B_{\phi \bullet \delta'})} \! \tau' \bullet \delta'$, observe that:
\begin{align*}
\lambda (\iota) \{ \phi \}_{p, q} = \lambda(\iota)_{\phi_p, q} = \iota_{\phi_p \& q} = \iota\{\phi^+\}_{p \& q} = \lambda (\iota\{\phi^+\})_{p, q}
\end{align*}
establishing $\lambda (\iota) \{ \phi \} = \lambda (\iota \{ \phi^+ \})$.

\item {\bfseries \sffamily App-Subst.} Finally, we have:
\begin{align*}
\mathsf{App} (\kappa, \tau) \{ \phi \} &= (\mathsf{ev} \bullet (\kappa \& \tau)) \bullet \phi  \\
&= \mathsf{ev} \bullet ((\kappa \& \tau) \bullet \phi)  \\
&= \mathsf{ev} \bullet ((\kappa \bullet \phi) \& (\tau \bullet \phi)) \\
&= \mathsf{ev} \bullet (\kappa \{ \phi \} \& \tau \{ \phi \})  \\
&= \mathsf{App} (\kappa \{ \phi \}, \tau \{ \phi \})
\end{align*}
where we omit the subscripts $A, B$ in $\mathsf{ev}_{A, B}$.

\end{itemize}

\end{proof}

\subsubsection{Game-theoretic $\sum$-types}
Next, we handle $\sum$-types.
\begin{proposition}[$\mathcal{PGD}$ supports $\sum$-types]
The CwF $\mathcal{PGD}$ supports $\sum$-types.
\end{proposition}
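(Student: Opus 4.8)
The plan is to mirror the proof for $\prod$-types, establishing in turn $\sum$-Form, $\sum$-Intro, $\sum$-Elim, $\sum$-Comp together with the relevant substitution laws. At each step the object-level content is already available from the corresponding construction on predicative games in \cite{yamada2016game}, so the genuinely new work lies entirely in equipping every construction with a coherent arrow-map and checking functoriality; this is why the remark before the proposition describes $\sum$ as simpler than $\widehat{\prod}$. Fix $\Gamma \in \mathcal{PGD}$, $A \in \mathsf{Ty}(\Gamma)$ and $B \in \mathsf{Ty}(\widehat{\sum}(\Gamma, A))$, and reuse the dependent gamoids $B_\gamma : A\gamma \to \mathcal{PGD}$, $B_\gamma(\sigma) \stackrel{\mathrm{df.}}{=} B(\gamma \& \sigma)$, exactly as in the $\prod$-Form construction.

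For $\sum$-Form I would define the dependent gamoid $\sum(A,B) : \Gamma \to \mathcal{PGD}$ pointwise by $\gamma \mapsto \widehat{\sum}(A\gamma, B_\gamma)$, and on an identification $p : \gamma =_\Gamma \! \gamma'$ by the transport functor
\begin{equation*}
p^{\widehat{\sum}} : \widehat{\sum}(A\gamma, B_\gamma) \to \widehat{\sum}(A\gamma', B_{\gamma'}), \qquad \sigma \& \tau \mapsto (Ap \bullet \sigma) \ \& \ (B(p \& \mathsf{id}_{Ap \bullet \sigma}) \bullet \tau).
\end{equation*}
This is the $\sum$-analogue of $p^{\widehat{\prod}}$, but covariant: no precomposition with $Ap^{-1}$ is needed, since $p \& \mathsf{id}_{Ap \bullet \sigma} : \gamma \& \sigma =_{\widehat{\sum}(\Gamma, A)} \! \gamma' \& (Ap \bullet \sigma)$ already supplies an isomorphism functor $B(\gamma \& \sigma) \to B(\gamma' \& (Ap \bullet \sigma))$ sending the second component into the correct fibre. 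I would then extend $p^{\widehat{\sum}}$ to identifications $r \& s$ in $\widehat{\sum}(A\gamma, B_\gamma)$ using the action of $Ap$ and of $B(p \& -)$, and verify functoriality of $p^{\widehat{\sum}}$ together with the assignment $p \mapsto p^{\widehat{\sum}}$ being a functor into $\mathcal{PGD}$; the essential inputs are the functoriality of $A$ and of $B$ and the compatibility of the twisted composition $\odot$ in dependent unions with $\bullet$.

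For $\sum$-Intro I would, given $\tau \in \widehat{\prod}(\Gamma, A)$ and $\kappa \in \widehat{\prod}(\Gamma, B\{\overline{\tau}\})$ with $\overline{\tau} \stackrel{\mathrm{df.}}{=} \mathsf{id}_\Gamma \& \tau$, define the pairing term $\mathsf{Pair}(\tau, \kappa) \in \widehat{\prod}(\Gamma, \sum(A,B))$ pointwise by $\gamma \mapsto (\tau \bullet \gamma) \& (\kappa \bullet \gamma)$, with arrow-map $p \mapsto \tau_p \& \kappa_p$, exactly as the extension $\langle \phi, \kappa \rangle_A$ was built. For $\sum$-Elim and $\sum$-Comp the clean route is to establish the associativity-of-comprehension isomorphism $\widehat{\sum}(\widehat{\sum}(\Gamma, A), B) \cong \widehat{\sum}(\Gamma, \sum(A,B))$ as an iso in $\mathcal{PGD}$ that matches objects, identifications (including their $\odot$-composites), and is stable under substitution; the eliminator is then obtained by transporting a term over the left-hand comprehension across this iso, and the computation rule is the statement that on $\mathsf{Pair}(\tau,\kappa)$ the roundtrip is the identity, which I expect to reduce to a strict-equality check on strategies plus a short arrow-map computation of the form $\mathsf{R}^{\widehat{\sum}}(\iota)_{\mathsf{Pair}} = \iota_{\dots}$ in the style of the $\prod$-Comp calculation. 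The substitution laws ($\sum$-Subst, Pair-Subst, and elimination-substitution) would follow from $A\{\phi\} = A \circ \phi$ and the bifunctoriality of $\&$, as in $\prod$-Subst and App-Subst.

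The main obstacle I anticipate is not the object level — which is inherited from \cite{yamada2016game} — but the two-dimensional bookkeeping: verifying that $p^{\widehat{\sum}}$ and the eliminator are genuinely functorial, i.e.\ equality-preserving in the sense of Definition on equality-preserving strategies, and that they cohere with the twisted composition $\odot$ and the transport $B(p \& \mathsf{id})$. Concretely, the subtle point is confirming that the second components of pairs land in the correct fibres after transport and that naturality (as used to merge the two identifications $\tau'_p \odot q_\sigma$ and $q_{\sigma'} \odot \tau_p$ in Proposition~\ref{EqualityPreservation}) makes the eliminator well-defined independent of the chosen witness; everything else should be a routine translation of the $\prod$-types argument.
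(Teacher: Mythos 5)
Your proposal is correct and follows essentially the same route as the paper: the transport functor $p^{\widehat{\sum}} : \sigma \& \tau \mapsto (Ap \bullet \sigma) \& (B(p \& \mathsf{id}_{Ap \bullet \sigma}) \bullet \tau)$ for $\sum$-Form is exactly the paper's definition, and your ``associativity-of-comprehension'' isomorphism $\widehat{\sum}(\widehat{\sum}(\Gamma,A),B) \cong \widehat{\sum}(\Gamma,\sum(A,B))$ is precisely the paper's key correspondence lemma, from which the eliminator and computation rule are obtained by the same tag-adjustment argument. The only (presentational) divergence is that you package $\sum$-Intro as a term-level pairing $\mathsf{Pair}(\tau,\kappa) \in \widehat{\prod}(\Gamma,\sum(A,B))$, whereas the paper takes $\mathsf{Pair}_{A,B}$ to be the context morphism $\widehat{\sum}(\widehat{\sum}(\Gamma,A),B) \to \widehat{\sum}(\Gamma,\sum(A,B))$ given by the evident dereliction, which is the form actually needed to state $P\{\mathsf{Pair}_{A,B}\}$ and the Pair-Subst equation; the two are interderivable.
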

\begin{proof}
Let $\Gamma \in \mathcal{PGD}$, $A \in \mathsf{Ty}(\Gamma)$, and $B \in \mathsf{Ty}(\widehat{\sum}(\Gamma, A))$ in $\mathcal{PGD}$.
\begin{itemize}
\item {\bfseries \sffamily $\bm{\sum}$-Form.} Similar to the case of dependent function spaces, we generalize the dependent pair space construction $\widehat{\sum}$ to the construction $\sum$ as follows. We define the dependent gamoid $\sum(A, B) : \Gamma \to \mathcal{PGD}$ by:
\begin{align*}
\textstyle (\gamma : \Gamma) &\mapsto \textstyle \widehat{\sum}(A\gamma, B_\gamma) \\
(p : \gamma =_\Gamma \! \gamma') &\mapsto \textstyle p^{\widehat{\sum}} : \widehat{\sum}(A\gamma, B_\gamma) \to \widehat{\sum}(A\gamma', B_{\gamma'})
\end{align*}
where the functor $p^{\widehat{\sum}} : \widehat{\sum}(A\gamma, B_\gamma) \to \widehat{\sum}(A\gamma', B_{\gamma'})$ is defined to be the strategy
\begin{equation*}
p^{\widehat{\sum}} \stackrel{\mathrm{df. }}{=}  \& \{ \sigma \& \tau \! \rightleftharpoons \! (Ap \bullet \sigma) \& (B( p \& \textsf{id}_{Ap \bullet \sigma}) \bullet \tau) \ \! | \ \! \sigma \& \tau \in \textstyle \widehat{\sum}(A\gamma, B_\gamma) \}
\end{equation*}
equipped with the arrow-map
\begin{align*}
s \& t &\mapsto (Ap)_s \& B (p \& \textsf{id}_{Ap \bullet \sigma_2})_t
\end{align*}
for any $\sigma_1, \sigma_2 : A\gamma$, $\tau_1 : B (\gamma \& \sigma_1)$, $\tau_2 : B (\gamma \& \sigma_2)$, $s : \sigma_1 =_{A\gamma} \! \sigma_2$, $t : B ( \textsf{id}_\gamma \& s) \bullet \tau_1 =_{B (\gamma \& \sigma_2)} \! \tau_2$. Note that we have:
\begin{align*}
s \& t &: \sigma_1 \& \tau_1 =_{\widehat{\sum}(A\gamma, B_\gamma)} \!\sigma_2 \& \tau_2 \\
(Ap)_s &: Ap \bullet \sigma_1 =_{A\gamma'} \! Ap \bullet \sigma_2 \\
B ( p \& \textsf{id}_{Ap \bullet \sigma_2} )_t &: B(\mathsf{id}_{\gamma'} \& (Ap)_s) \bullet B(p \& \mathsf{id}_{Ap \bullet \sigma_1}) \bullet \tau_1 =_{B(\gamma' \& Ap \bullet \sigma_2)} \! B (p \& \textsf{id}_{Ap \bullet \sigma_2})  \bullet \tau_2 \\
(Ap)_s \& B (p \& \textsf{id}_{Ap \bullet \sigma_2})_t &: (Ap \bullet \sigma_1) \& (B(p \& \mathsf{id}_{Ap \bullet \sigma_1}) \bullet \tau_1) =_{\widehat{\sum}(A\gamma', B_{\gamma'})} \! (Ap \bullet \sigma_2) \& (B(p \& \mathsf{id}_{Ap \bullet \sigma_2}) \bullet \tau_2). 
\end{align*}
Thus, the arrow-map is well-defined. Also, it is easy to observe its functoriality.

\item {\bfseries \sffamily $\bm{\sum}$-Intro.} We need the following lemma:
\begin{lemma}[$\widehat{\sum} \sum$-correspondence lemma] 
\label{SumSum}
For any dependent gamoids $A : \Gamma \to \mathcal{PGD}$, $B : \widehat{\sum}(\Gamma, A) \to \mathcal{PGD}$, we have a correspondence
\begin{equation*}
\textstyle \widehat{\sum}(\widehat{\sum}(\Gamma,A),B) \cong \widehat{\sum}(\Gamma, \sum(A,B))
\end{equation*}
\end{lemma}
\begin{proof}[Proof of the lemma]
Because the correspondence between objects is obvious, it suffices to establish the correspondence between identifications. First, note that an identification between objects $(\gamma \& \sigma) \& \tau, (\gamma' \& \sigma')  \& \tau' \in \textstyle \widehat{\sum}(\widehat{\sum}(\Gamma,A),B)$ is of the form
\begin{equation*}
(p \& q) \& s : (\gamma \& \sigma) \& \tau =_{\textstyle \widehat{\sum}(\widehat{\sum}(\Gamma,A),B)} \! (\gamma' \& \sigma') \& \tau'
\end{equation*}
where $p : \gamma =_{\Gamma} \! \gamma'$, $q : Ap \bullet \sigma =_{A\gamma'} \! \sigma'$ and $s : B(p \& q) \bullet \tau =_{B (\gamma' \& \sigma')} \! \tau'$. 
It is then not hard to see that there is the corresponding identification
\begin{equation*}
p \& (q \& s) : \gamma \& (\sigma \& \tau) =_{\textstyle \widehat{\sum}(\Gamma, \sum(A,B))} \! \gamma' \& (\sigma' \& \tau').
\end{equation*}
Moreover, we can clearly reverse this process, completing the proof of the lemma.
\end{proof}

Thanks to the lemma, we then define the pair
\begin{equation*}
\textstyle \mathsf{Pair}_{A, B} : \widehat{\sum}(\widehat{\sum}(\Gamma,A),B) \to \widehat{\sum}(\Gamma, \sum(A,B))
\end{equation*}
as the obvious dereliction ``up to the tag of moves for disjoint union''.

\item {\bfseries \sffamily $\bm{\sum}$-Elim.} For any $P \in \mathsf{Ty}(\widehat{\sum}(\Gamma, \sum(A,B)))$ and $\psi \in \widehat{\prod}(\widehat{\sum}(\widehat{\sum}(\Gamma, A), B), P\{\mathsf{Pair}_{A,B}\})$, by Lemma~\ref{SumSum}, we may construct an object
\begin{equation*}
\textstyle R^{\sum}_{A, B, P}(\psi) \in \widehat{\prod}(\widehat{\sum}(\Gamma, \sum (A, B)), P) 
\end{equation*}
from $\psi$ just by ``adjusting the tags of moves for disjoint union''.

\item {\bfseries \sffamily $\bm{\sum}$-Comp.} The equation
\begin{align*}
\textstyle R^{\sum}_{A, B, P}(\psi) \{ \mathsf{Pair}_{A, B}\} = \textstyle R^{\sum}_{A, B, P}(\psi) \bullet \mathsf{Pair}_{A, B} = \psi
\end{align*}
is obvious by the definition.

\item {\bfseries \sffamily $\bm{\sum}$-Subst.} Moreover, for any context gamoid $\Delta \in \mathcal{PGD}$ and a morphism $\phi : \Delta \to \Gamma$ in $\mathcal{PGD}$, by the same calculation as the case of dependent function spaces, we have, for the object-map,
\begin{align*}
\textstyle \sum (A, B) \{ \phi \}(\delta) &= \textstyle \widehat{\sum}(A(\phi \bullet \delta), B_{\phi \bullet \delta}) \\
&= \textstyle \widehat{\sum}(A\{\phi\}(\delta), B\{\phi^+\}_\delta) \\
&= \textstyle \sum (A\{\phi\}, B\{\phi^+\})(\delta)
\end{align*}
for all $\delta : \Delta$, where $\phi^+ \stackrel{\mathrm{df. }}{=} (\phi \bullet \mathsf{p}(A\{ \phi \}) \& \mathsf{v}_{A\{\phi\}}) : \widehat{\sum}(\Delta, A\{\phi\}) \to \widehat{\sum}(\Gamma, A)$. 
And for the arrow-map, for any $\delta, \delta' : \Delta$, $p : \delta =_\Delta \! \delta$, we clearly have
\begin{align*}
& \ \textstyle \sum (A, B) \{ \phi \}(p) \\
= & \ \textstyle \sum (A, B)(\phi_p) \\
= & \ \& \{ \textstyle \sigma \& \tau \! \rightleftharpoons \! (A\phi_p \bullet \sigma) \& (B(\phi_p \& \mathsf{id}_{A\phi_p \bullet \sigma}) \bullet \tau) \ \! | \ \! \sigma \& \tau \in \widehat{\sum}(A(\phi \bullet \delta), B_{\phi \bullet \delta}) \} \\
= & \ \textstyle \& \{ \sigma \& \tau \! \rightleftharpoons \! (A\{\phi\}(p) \bullet \sigma) \& (B\{\phi^+\}(p \& \mathsf{id}_{A\{\phi\}(p) \bullet \sigma}) \bullet \tau) \ \! | \ \! \sigma \& \tau \in \widehat{\sum}(A\{\phi\}(\delta), B\{\phi^+\}_\delta) \} \\
= & \ \textstyle \sum (A\{\phi\}, B\{\phi^+\})(p).
\end{align*}
Therefore, we may conclude that $\sum (A, B) \{ \phi \} = \sum (A\{\phi\}, B\{\phi^+\})$.

\item {\bfseries \sffamily Pair-Subst.} Under the same assumption, the equation
\begin{equation*}
\textstyle \mathsf{p}(\sum (A, B)) \bullet \mathsf{Pair}_{A, B} = \mathsf{p}(A) \bullet \mathsf{p}(B)
\end{equation*}
is obvious by the definition, and we also have:
\begin{align*}
& \ \phi^* \bullet \mathsf{Pair}_{A\{\phi\}, B\{\phi^+\}} \\
= & \ (\phi \bullet \textstyle \mathsf{p}(\sum(A,B)\{\phi\})) \& \mathsf{v}_{\sum(A,B)\{\phi\}} ) \bullet \mathsf{Pair}_{A\{\phi\}, B\{\phi^+\}} \\
= & \ (\phi \bullet \textstyle \mathsf{p}(\sum(A\{\phi\},B\{\phi^+\})) \bullet \mathsf{Pair}_{A\{\phi\} \& (B\{\phi^+\}}) \& (\mathsf{v}_{\sum(A,B)\{\phi\}} \bullet \mathsf{Pair}_{A\{\phi\}, B\{\phi^+\}}) \\
= & \ (\phi \bullet \mathsf{p}(A\{\phi\}) \bullet \mathsf{p}(B\{\phi^+\})) \& (\mathsf{v}_{\sum(A\{\phi\},B\{\phi^+\})} \bullet \mathsf{Pair}_{A\{\phi\}, B\{\phi^+\}}) \\
= & \ \mathsf{Pair}_{A, B} \bullet ((\phi^+ \bullet \mathsf{p}(B\{\phi^+\})) \& \mathsf{v}_{B\{\phi^+\}}) \\
= & \ \mathsf{Pair}_{A, B} \bullet \phi^{++} 
\end{align*}
where $\phi^* \stackrel{\mathrm{df. }}{=} (\phi \bullet \mathsf{p}(\sum(A,B)\{\phi\})) \& \mathsf{v}_{\sum(A,B)\{\phi\}} : \widehat{\sum}(\Delta, \sum(A\{\phi\}, B\{\phi^+\})) \to \widehat{\sum}(\Gamma, \sum(A, B))$, $\phi^{++} \stackrel{\mathrm{df. }}{=} (\phi^+ \bullet \mathsf{p}(B\{\phi^+\})) \& \mathsf{v}_{B\{\phi^+\}} : \widehat{\sum}(\widehat{\sum}(\Gamma, A), B\{\phi^+\}) \to \widehat{\sum}(\widehat{\sum}(\Gamma, A), B)$.

\item {\bfseries \sffamily $\bm{R^{\sum}}$-Subst.} Finally, we have:
\begin{align*}
& \textstyle R^{\sum}_{A, B, P}(\psi) \{\phi^*\} \\
= \ & \textstyle R^{\sum}_{A, B, P}(\psi) \bullet ((\phi \bullet \mathsf{p}(\sum(A,B)\{\phi\})) \& \mathsf{v}_{\sum(A,B)\{\phi\}}) \\
= \ & R^{\sum}_{A\{\phi\}, B\{\phi^+\}, P\{\phi^*\}} (\psi \bullet ((\phi^+ \bullet \mathsf{p}(B\{\phi^+\})) \& \mathsf{v}_{B\{\phi^+\}})) \\
= \ & R^{\sum}_{A\{\phi\}, B\{\phi^+\}, P\{\phi^*\}} (\psi \bullet \phi^{++} ) \\
= \ & R^{\sum}_{A\{\phi\}, B\{\phi^+\}, P\{\phi^*\}} (\psi \{ \phi^{++} \}) 
\end{align*}
as a strict equality between strategies. And for the arrow-map, we have:
\begin{align*}
\textstyle R^{\sum}_{A, B, P}(\psi) \{\phi^*\}_{p \& (q \& s)} &= R^{\sum}_{A, B, P}(\psi)_{\phi_p \& (q \& s)} \\
&= R^{\sum}_{A, B, P}(\psi_{(\phi_p \& q) \& s}) \\
&= R^{\sum}_{A\{\phi\}, B\{\phi^+\}, P\{\phi^*\}} (\psi \{ \phi^{++} \}_{(p \& q) \& s}) \\
&= R^{\sum}_{A\{\phi\}, B\{\phi^+\}, P\{\phi^*\}} (\psi \{ \phi^{++} \})_{p \& (q \& s)} 
\end{align*}
for any equality $p \& (q \& s) : \delta \& (\sigma \& \tau) =_{\widehat{\sum} (\Delta, \sum(A, B)\{ \phi \})} \! \delta' \& (\sigma' \& \tau')$ in $\widehat{\sum} (\Delta, \sum(A, B)\{ \phi \})$.
Thus, we may conclude that $R^{\sum}_{A, B, P}(\psi) \{\phi^*\} = R^{\sum}_{A\{\phi\}, B\{\phi^+\}, P\{\phi^*\}} (\psi \{ \phi^{++} \})$.
\end{itemize}

\end{proof}

\subsubsection{Game-theoretic Id-types}
Now, we consider \emph{intensional} Id-types. Our interpretation will refute the principle of \emph{uniqueness of identity proofs} and validate the \emph{univalence axiom} as well as the axiom of \emph{function extensionality} (see Section~\ref{Intensionality}), though we do not interpret non-trivial higher propositional equalities.

\begin{proposition}[$\mathcal{PGD}$ supports intensional Id-types]
The CwF $\mathcal{PGD}$ of gamoids supports the (intensional) identity types.
\end{proposition}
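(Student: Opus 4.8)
The plan is to equip $\mathcal{PGD}$ with the semantic $\mathsf{Id}$-type structure — formation, introduction (reflexivity), elimination (the $\mathsf{J}$-rule), computation, and the attendant stability-under-substitution laws — following the groupoid interpretation of \cite{hofmann1998groupoid} and reusing the $\mathsf{Id}$-gamoid construction $\widehat{\mathsf{Id}}$. Throughout, an object of $\widehat{\sum}(\widehat{\sum}(\Gamma, A), A\{\mathsf{p}(A)\})$ has the form $(\gamma \& \sigma_1)\& \sigma_2$ with $\sigma_1, \sigma_2 : A\gamma$, since $A\{\mathsf{p}(A)\}(\gamma\&\sigma_1) = A\gamma$.

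For \textbf{Id-Form}, given $\Gamma \in \mathcal{PGD}$ and $A \in \mathsf{Ty}(\Gamma)$, I would define the dependent gamoid $\mathsf{Id}_A : \widehat{\sum}(\widehat{\sum}(\Gamma, A), A\{\mathsf{p}(A)\}) \to \mathcal{PGD}$ on objects by $(\gamma\&\sigma_1)\&\sigma_2 \mapsto \widehat{\mathsf{Id}}_{A\gamma}(\sigma_1,\sigma_2) = \mathcal{D}(\sigma_1 =_{A\gamma}\sigma_2)$, and on an identification $(p\&q)\&r$ (with $p : \gamma =_\Gamma \gamma'$, $q : Ap\bullet\sigma_1 =_{A\gamma'}\sigma_1'$, $r : Ap\bullet\sigma_2 =_{A\gamma'}\sigma_2'$) by the isomorphism functor sending an object $w : \sigma_1 =_{A\gamma}\sigma_2$ to the conjugate $r \circ (Ap)_w \circ q^{-1} : \sigma_1' =_{A\gamma'}\sigma_2'$. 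Functoriality follows from that of $A$ together with the groupoid laws, and since every fibre is a discrete gamoid nothing further need be checked on identifications. For \textbf{Id-Intro}, writing $\delta_A = \langle \mathsf{id}_{\widehat{\sum}(\Gamma,A)}, \mathsf{v}_A\rangle$ for the diagonal $\gamma\&\sigma \mapsto (\gamma\&\sigma)\&\sigma$, I would define $\mathsf{Refl}_A \in \widehat{\prod}(\widehat{\sum}(\Gamma, A), \mathsf{Id}_A\{\delta_A\})$ by $\gamma\&\sigma \mapsto \mathsf{id}_\sigma$; its arrow-map is forced and trivially natural because the fibres are discrete.

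The crux is \textbf{Id-Elim}. Suppose $C \in \mathsf{Ty}(\widehat{\sum}(\widehat{\sum}(\widehat{\sum}(\Gamma, A), A\{\mathsf{p}(A)\}), \mathsf{Id}_A))$ and $c \in \widehat{\prod}(\widehat{\sum}(\Gamma, A), C\{\rho_A\})$, where $\rho_A = \langle \delta_A, \mathsf{Refl}_A\rangle$ is the reflexivity section $\gamma\&\sigma \mapsto (((\gamma\&\sigma)\&\sigma)\&\mathsf{id}_\sigma)$. The key observation is that every object $(((\gamma\&\sigma_1)\&\sigma_2)\&w)$ of the total context admits a canonical identification $\theta_w = ((\mathsf{id}_\gamma \& \mathsf{id}_{\sigma_1})\& w)\& \mathsf{id}_w$ from the reflexivity point $(((\gamma\&\sigma_1)\&\sigma_1)\&\mathsf{id}_{\sigma_1})$: this is well-typed precisely because $\mathsf{Id}_A((\mathsf{id}_\gamma\&\mathsf{id}_{\sigma_1})\&w)$ sends $\mathsf{id}_{\sigma_1}$ to $w\circ (A\mathsf{id}_\gamma)_{\mathsf{id}_{\sigma_1}}\circ\mathsf{id}_{\sigma_1}^{-1} = w$, so that the required fibre-component is the trivial identification $\mathsf{id}_w$ in the discrete gamoid. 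I then define the eliminator by transport, $\mathsf{J}_C(c)\bullet(((\gamma\&\sigma_1)\&\sigma_2)\&w) \stackrel{\mathrm{df.}}{=} C\theta_w \bullet (c\bullet(\gamma\&\sigma_1))$. For \textbf{Id-Comp}, at a reflexivity point $\sigma_2 = \sigma_1$, $w = \mathsf{id}_{\sigma_1}$ the identification $\theta_{\mathsf{id}_{\sigma_1}}$ is the identity of the total context, whence $C\theta_{\mathsf{id}_{\sigma_1}} = \mathsf{id}$ and $\mathsf{J}_C(c)$ collapses to $c$, giving the strict equality $\mathsf{J}_C(c)\{\rho_A\} = c$.

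Finally, the substitution laws (Id-Subst, Refl-Subst, J-Subst) would follow exactly as in the $\prod$- and $\sum$-cases: since $\_\{\phi\}$ is precomposition with $\phi$ on both object- and arrow-maps, each law reduces to functoriality of $A$, $C$ and $\phi$ together with the definitions above. The main obstacle I expect is establishing the arrow-map of $\mathsf{J}_C(c)$ and its functoriality: an arbitrary identification between two general points $(((\gamma\&\sigma_1)\&\sigma_2)\&w)$ and $(((\gamma'\&\sigma_1')\&\sigma_2')\&w')$ must transport the $c$-values compatibly with the two canonical identifications $\theta_w$, $\theta_{w'}$. I expect this to reduce to showing that the square formed by $\theta_w$, $\theta_{w'}$ and the given identification commutes in the domain of $C$ — a computation in the ambient groupoid that relies essentially on the discreteness of the $\mathsf{Id}$-gamoids (so that the fibre-components are forced) and on functoriality of $C$, which then carries the commuting square to a commuting square and yields the desired naturality.
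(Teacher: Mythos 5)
Your proposal is correct and follows essentially the same route as the paper: the same Id-gamoid formation via conjugation $w \mapsto r \circ (Ap)_w \circ q^{-1}$, the same reflexivity section, and the same eliminator defined by transporting $c$ along the canonical identification $((\mathsf{id}_{\gamma\&\sigma_1})\& w)\&\mathsf{id}_w$ from the reflexivity point, with the computation rule holding strictly. If anything, you are more explicit than the paper about the one genuinely delicate point — the arrow-map and functoriality of the eliminator, which the paper leaves implicit — and your sketch of how discreteness of the Id-gamoids forces the fibre components is the right way to discharge it.
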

\begin{proof}
Let $\Gamma \in \mathcal{PGD}$, $A \in \mathsf{Ty}(\Gamma)$, and $A^+ \stackrel{\mathrm{df. }}{=} A\{\mathsf{p}(A)\} \in \mathsf{Ty}(\widehat{\sum}(\Gamma, A))$.
\begin{itemize}

\item {\bfseries \sffamily Id-Form.} The dependent gamoid $\textsf{Id}_A : \widehat{\sum}(\widehat{\sum}(\Gamma, A), A^+) \to \mathcal{PGD}$ is defined by
\begin{align*}
\textstyle (\gamma \& \sigma_1) \& \sigma_2 \in \widehat{\sum}(\widehat{\sum}(\Gamma, A), A^+)) &\mapsto \widehat{\textsf{Id}}_A(\sigma_1, \sigma_2)
\end{align*}
for the object-map, and 
\begin{align*}
(p \& q_1) \& q_2 \mapsto \widehat{\textsf{Id}}_A(q_1,q_2) :  \widehat{\textsf{Id}}_A(\sigma_1, \sigma_2) \to \widehat{\textsf{Id}}_A(\sigma'_1, \sigma'_2)
\end{align*}
for the arrow-map, where $p : \gamma =_{\Gamma} \! \gamma'$, $\sigma_1, \sigma_2 : A\gamma$, $\sigma'_1, \sigma'_2 : A\gamma'$, $q_1 : Ap \bullet \sigma_1 =_{A\gamma'} \! \sigma'_1$, $q_2 : Ap \bullet \sigma_2 =_{A\gamma'} \! \sigma'_2$, and the morphism $\widehat{\textsf{Id}}_A(q_1,q_2) :  \widehat{\textsf{Id}}_A(\sigma_1, \sigma_2) \to \widehat{\textsf{Id}}_A(\sigma'_1, \sigma'_2)$ in $\mathcal{PGD}$ is defined to be the strategy
\begin{equation*}
\& \{ \alpha \! \rightleftharpoons \! q_2 \bullet (Ap)_\alpha \bullet q_1^{-1} \ \! | \ \! \alpha \in \widehat{\textsf{Id}}_A(\sigma_1, \sigma_2) \}
\end{equation*}
equipped with the arrow-map $\textsf{id}_\alpha \mapsto \textsf{id}_{q_2 \bullet (Ap)_\alpha \bullet q_1^{-1}}$ for all $\alpha \in \widehat{\textsf{Id}}_A(\sigma_1, \sigma_2)$.

\item {\bfseries \sffamily Id-Intro.} The morphism
\begin{equation*}
\textstyle \textsf{Refl}_A : \widehat{\sum}(\Gamma, A_1) \to \widehat{\sum}(\widehat{\sum}(\widehat{\sum}(\Gamma, A_2), A_3^+), \textsf{Id}_A)
\end{equation*}
is defined to be the copy-cat strategy between $\widehat{\sum}(\Gamma, A_1)$ and $\widehat{\sum}(\Gamma, A_2)$, $A_1$ and $A_3^+$, or on $\textsf{Id}_A$, where the subscripts are to distinguish the different copies of $A$, i.e., 
\begin{equation*}
\textstyle \textsf{Refl}_A \stackrel{\mathrm{df. }}{=} \& \{ \gamma \& \sigma \! \rightleftharpoons ((\gamma \& \sigma) \& \sigma) \& \mathsf{id}_\sigma \ \! | \ \! \gamma \& \sigma \in \widehat{\sum} (\Gamma, A) \}
\end{equation*}
equipped with the arrow-map $(p \& q : \gamma \& \sigma =_{\widehat{\sum}(\Gamma, A)} \! \gamma' \& \sigma') \mapsto ((p \& q) \& q) \& \mathsf{id}_{\mathsf{id}_{\sigma'}}$.

\item {\bfseries \sffamily Id-Elim.} For a dependent gamoid $B \in \mathsf{Ty}(\widehat{\sum}(\widehat{\sum}(\widehat{\sum}(\Gamma, A),  A^+), \textsf{Id}_A))$ and an object $\tau \in \widehat{\prod}(\widehat{\sum}(\Gamma, A), B\{\textsf{Refl}_A\})$ in $\mathcal{PGD}$, we need to define an object
\begin{equation*}
\textstyle R^{\textsf{Id}}_{A,B}(\tau) \in \widehat{\prod}(\widehat{\sum}(\widehat{\sum}(\widehat{\sum}(\Gamma, A_1), A_2^+), \textsf{Id}_A), B).
\end{equation*}
Then, note that we have an equality 
\begin{align*}
(\textsf{id}_{\gamma \& \sigma_1} \& \alpha) \& \textsf{id}_\alpha : \textsf{Refl}_A \bullet (\gamma \& \sigma_1)  =_{\widehat{\sum}(\widehat{\sum}(\widehat{\sum}(\Gamma, A_1), A_2^+), \textsf{Id}_A)} \! ((\gamma \& \sigma_1) \& \sigma_2) \& \alpha
\end{align*}
because $\textsf{id}_{\gamma \& \sigma_1} : \gamma \& \sigma_1 =_{\widehat{\sum}(\Gamma, A_1)} \! \gamma \& \sigma_1 $, $\alpha : A^+(\mathsf{id}_{\gamma \& \sigma_1}) \bullet \sigma_1 =_{A^+(\gamma \& \sigma_2)} \! \sigma_2$, and $\textsf{id}_\alpha : \textsf{Id}_{A}(\textsf{id}_{\gamma \& \sigma_1} \& \alpha) \bullet \textsf{id}_{\sigma_1} =_{\mathsf{Id}_A((\gamma \& \sigma_1) \& \sigma_2)} \! \alpha$. This induces the isomorphism functor
\begin{equation*}
B^{\textsf{Id}}_\alpha \stackrel{\mathrm{df. }}{=} B((\textsf{id}_{ \gamma \& \sigma_1} \& \alpha) \& \textsf{id}_\alpha) : B(\textsf{Refl}_A \bullet (\gamma \& \sigma_1)) \stackrel{\simeq}{\to} B(((\gamma \& \sigma_1) \& \sigma_2)  \& \alpha).
\end{equation*}
We then define $R^{\textsf{Id}}_{A,B}(\tau)$ to be
\begin{equation*}
\textstyle  \& \{ ((\gamma \& \sigma_1) \& \sigma_2) \& \alpha \! \rightleftharpoons \! B^{\textsf{Id}}_\alpha \bullet \tau \bullet (\gamma \& \sigma_1) \ \! | \ \! ((\gamma \& \sigma_1) \& \sigma_2) \& \alpha \in \widehat{\sum}(\widehat{\sum}(\widehat{\sum}(\Gamma, A_1), A_2^+), \textsf{Id}_A) \}.
\end{equation*} 
\begin{notation}
We often omit the subscripts $A, B$ in $R^{\textsf{Id}}_{A,B}(\tau)$.
\end{notation}

\item {\bfseries \sffamily Id-Comp.} By the definition, it is straightforward to see that $R^{\textsf{Id}}_{A,B}(\tau)\{\textsf{Refl}_A\} = \tau$:
\begin{align*}
R^{\textsf{Id}}_{A,B}(\tau)\{\textsf{Refl}_A\} &= R^{\textsf{Id}}_{A,B}(\tau) \bullet \textsf{Refl}_A \\
&= \textstyle \& \{ \gamma \& \sigma \! \rightleftharpoons \! B^{\textsf{Id}}_{\textsf{id}_\sigma} \! \bullet \tau \bullet (\gamma \& \sigma) \ \! | \ \! \gamma \& \sigma \in \widehat{\sum}(\Gamma, A) \} \\
&= \textstyle \& \{ \gamma \& \sigma \! \rightleftharpoons \! \tau \bullet (\gamma \& \sigma) \ \! | \ \! \gamma \& \sigma \in \widehat{\sum}(\Gamma, A) \} \\
&= \tau.
\end{align*}

\item {\bfseries \sffamily Id-Subst.} Furthermore, for any context gamoid $\Delta \in \mathcal{PGD}$ and strategy $\phi : \Delta \to \Gamma$ in $\mathcal{PGD}$, it is straightforward to see $\textsf{Id}_A\{\phi^{++}\} =  \textsf{Id}_{A\{\phi\}}$: For the object-map, we have
\begin{align*}
\textsf{Id}_A\{\phi^{++}\}((\delta \& \sigma_1) \& \sigma_2) &= \textstyle \textsf{Id}_{A\{\phi\}}(\phi^{++} \bullet (\delta \& \sigma_1) \& \sigma_2) \\
&= \textstyle \textsf{Id}_{A\{\phi\}}(((\phi \bullet \delta) \& \sigma_1) \& \sigma_2) \\
&= \textstyle \widehat{\textsf{Id}}_{A\{\phi\}}(\sigma_1, \sigma_2) \\
&= \textsf{Id}_{A\{\phi\}} ((\delta \& \sigma_1) \& \sigma_2)
\end{align*}
for any $(\delta \& \sigma_1) \& \sigma_2 \in \widehat{\sum}(\widehat{\sum}(\Delta, A\{\phi\}), A\{\phi\}^+)$, where $\phi^+ \stackrel{\mathrm{df. }}{=} (\phi \bullet \mathsf{p}(A\{ \phi \})) \& \mathsf{v}_{A\{\phi\}} : \widehat{\sum}(\Delta, A\{\phi\}) \to \widehat{\sum}(\Gamma, A)$, and $\phi^{++} \stackrel{\mathrm{df. }}{=} (\phi^+ \bullet \mathsf{p}(A^+\{\phi^+\})) \& \mathsf{v}_{A^+\{\phi^+\}} : \widehat{\sum}(\widehat{\sum}(\Delta, A\{\phi\}), A^+\{\phi^+\}) \to \widehat{\sum}(\widehat{\sum}(\Gamma, A), A^+)$, and for the arrow-map,
\begin{align*}
\textsf{Id}_A\{\phi^{++}\}((p \& q) \& s) &= \textstyle \textsf{Id}_{A\{\phi\}}(\phi^{++}_{(p \& q) \& s}) \\
&= \textstyle \textsf{Id}_{A\{\phi\}}((\phi_p \& q) \& s) \\
&= \textstyle \widehat{\textsf{Id}}_{A\{\phi\}}(q, s) \\
&= \textsf{Id}_{A\{\phi\}} ((p \& q) \& s)
\end{align*}
for any identification $(p \& q) \& s$ in $\widehat{\sum}(\widehat{\sum}(\Delta, A\{\phi\}), A\{\phi\}^+)$.

\item {\bfseries \sffamily Refl-Subst.} Also, the following equation holds:
\begin{align*}
\textsf{Refl}_A \bullet \phi^+ &= \textsf{Refl}_A \bullet ((\phi \bullet \mathsf{p}(A\{ \phi \})) \& \mathsf{v}_{A\{\phi\}}) \\
&= \textstyle \& \{ \delta \& \sigma \! \rightleftharpoons \! (((\phi \bullet \delta) \& \sigma) \& \sigma) \& \mathsf{id}_{\sigma} \ \! | \ \! \delta \& \sigma \in \widehat{\sum} (\Delta, A\{\phi\}) \} \\
&= ((\phi^{++} \bullet \mathsf{p}(\textsf{Id}_A\{\phi^{++}\})) \& \mathsf{v}_{\textsf{Id}_A\{\phi^{++}\}}) \bullet \textsf{Refl}_{A\{\phi\}}  \\
&= \phi^{+++} \bullet \textsf{Refl}_{A\{\phi\}} 
\end{align*}
where $\phi^{+++}  \stackrel{\mathrm{df. }}{=} (\phi^{++} \! \bullet p(\textsf{Id}_A\{\phi^{++}\})) \& v_{\textsf{Id}_A\{\phi^{++}\}}$.

\item {\bfseries \sffamily $\bm{R^{\textsf{Id}}}$-Subst.} Finally, we have:
\begin{align*}
& \ R^{\textsf{Id}}_{A,B}(\tau)\{\phi^{+++}\} \\
= & \ (R^{\textsf{Id}}_{A,B}(\tau) \bullet (\phi^{++} \bullet p(\textsf{Id}_A\{\phi^{++}\})) \& v_{\textsf{Id}_A\{\phi^{++}\}}) \\
= & \ \textstyle \& \{ ((\delta \& \sigma_1) \& \sigma_2) \& \alpha \! \rightleftharpoons \! B^{\mathsf{Id}}_{\alpha} \bullet \tau \bullet ((\phi \bullet \delta) \& \sigma_1) \ \! | \ \! ((\delta \& \sigma_1) \& \sigma_2) \& \alpha \in \widehat{\sum} (\widehat{\sum} (\widehat{\sum} (\Delta, A\{\phi\}), A^+\{\phi^+\}), \mathsf{Id}_{A\{\phi\}}) \} \\ 
= & \ R^{\textsf{Id}}_{A\{\phi\}, B\{\phi^{+++}\}}(\tau \bullet ((\phi \bullet \mathsf{p}(A\{ \phi \})) \& \mathsf{v}_{A\{\phi\}})) \\
= & \ R^{\textsf{Id}}_{A\{\phi\}, B\{\phi^{+++}\}}(\tau\{\phi^+\}).
\end{align*}

\end{itemize}

\end{proof}

\subsubsection{Game-theoretic Universes}

Next, we equip the CwF $\mathcal{PGD}$ with the game-theoretic universes. For the general, categorical definition of \emph{semantic universes}, see \cite{yamada2016game}.
\begin{proposition}[$\mathcal{PGD}$ supports universes]
The CwF $\mathcal{PGD}$ of predicative gamoids supports universes.
\end{proposition}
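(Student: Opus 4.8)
The plan is to lift the universe structure of the previous paper \cite{yamada2016game} from predicative games to predicative gamoids, so that the only genuinely new work is to equip the game-level universe with a groupoid structure and to check that decoding respects it. Concretely, I would start from the (cumulative hierarchy of) universe predicative game(s) $\mathcal{U}$ of \cite{yamada2016game}, whose strategies name small predicative gamoids, and turn each into a \emph{universe gamoid} by declaring the identifications $a =_{\mathcal{U}} a'$ to be exactly the isomorphism functors (i.e.\ invertible equality-preserving strategies) between the gamoids named by $a$ and $a'$ — precisely the requirement foreshadowed in the remark following Definition~\ref{DependentGamoids}. Since isomorphism functors compose, invert, and possess identities, this is immediately a groupoid, hence a gamoid; and because the underlying universe game is constant in the context, the constant dependent gamoid $\mathcal{U} : \Gamma \to \mathcal{PGD}$ interprets \textbf{U-Form} and is trivially stable under substitution, $\mathcal{U}\{\phi\} = \mathcal{U}$.

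Next I would define the decoding operation $\mathsf{El}$ on terms. A term $a \in \widehat{\prod}(\Gamma, \mathcal{U})$ is an equality-preserving strategy $\Gamma \to \mathcal{U}$, and I would set $\mathsf{El}(a) : \Gamma \to \mathcal{PGD}$ to be the composite of $a$ with the decoding functor on $\mathcal{U}$: on objects, $\mathsf{El}(a)(\gamma)$ is the gamoid named by $a \bullet \gamma$, and on an identification $p : \gamma =_\Gamma \gamma'$ the arrow $a_p : a \bullet \gamma =_{\mathcal{U}} a \bullet \gamma'$ is by definition an isomorphism functor, which I take to be $\mathsf{El}(a)(p)$. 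The key step is functoriality of $\mathsf{El}(a)$, which follows from functoriality of $a$ together with the fact that decoding sends composition and identities of isomorphism functors to composition and identities in $\mathcal{PGD}$; this is the one place where the groupoid structure does real work beyond \cite{yamada2016game}. Closure of the universe under the type formers is then largely inherited: since the named gamoids are closed under $\widehat{\prod}$, $\widehat{\sum}$ and $\widehat{\textsf{Id}}$ by the preceding propositions, I would supply the corresponding code strategies as in \cite{yamada2016game} and verify that $\mathsf{El}$ decodes them to $\prod(A,B)$, $\sum(A,B)$ and $\textsf{Id}_A$ respectively, the novel content being compatibility of these codes with identifications.

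Finally I would check the substitution law $\mathsf{El}(a\{\phi\}) = \mathsf{El}(a)\{\phi\}$ on both object- and arrow-maps, by the same bookkeeping as in the $\prod$-, $\sum$- and Id-type cases: on objects $\mathsf{El}(a\{\phi\})(\delta) = \mathsf{El}(a)(\phi \bullet \delta) = \mathsf{El}(a)\{\phi\}(\delta)$, and on an identification $p$ the decoded isomorphism functors agree because $a\{\phi\}_p = a_{\phi_p}$. The hard part will not be any single calculation but rather pinning down the universe gamoid itself: the identifications of $\mathcal{U}$ must be isomorphism functors between named gamoids and \emph{simultaneously} be representable as single strategies in the underlying universe game, and one must discharge the size and predicativity bookkeeping ensuring that each $\mathcal{U}$ in the cumulative hierarchy $\mathcal{U}_0 \subseteq \mathcal{U}_1 \subseteq \cdots$ is a legitimate predicative gamoid large enough to contain all the small gamoids yet not itself. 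Threading the functoriality and coherence of decoding through this representation — rather than any of the routine substitution equalities — is where the main difficulty lies, and it is exactly this functoriality of $\mathsf{El}$ that will later power the validation of univalence in Section~\ref{Intensionality}.
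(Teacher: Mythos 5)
Your proposal is correct in outline but takes the heavier of two routes, and it is worth noting that the paper explicitly declines it. In the remark following Definition~\ref{DependentGamoids} the paper observes that one \emph{could} define dependent gamoids as morphisms into a universe gamoid whose identifications are required to be equality-preserving strategies --- essentially your construction, complete with the need for a decoding operation --- but instead takes $\mathsf{Ty}(\Gamma)$ to be the set of functors $\Gamma \to \mathcal{PGD}$ directly. As a consequence the paper's proof of this proposition has no $\mathsf{El}$ at all and no coherence or substitution laws for decoding to check: \textbf{U-Form} takes the universe to be the constant dependent gamoid $\gamma \mapsto \mathcal{C}(\mathcal{U}_n)$, the \emph{canonical} gamoid on the universe game of Part I (all isomorphism strategies between codes count as identifications, not only the equality-preserving ones you propose --- the paper later stresses in Section~\ref{Intensionality} that this canonical-gamoid choice is what makes univalence go through); \textbf{U-Intro/Elim/Comp} then amount to the observation that a dependent gamoid $G$ already \emph{is} a term $G \in \widehat{\prod}(\Gamma, \mathcal{U}_n)$, since $G\gamma$ is a strategy on $\mathcal{U}_n$ and the functor $Gp$ supplies the required identification; and \textbf{U-Cumul} is immediate from the cumulativity of the universe games. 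What your route buys is a more standard Tarski-style presentation with an explicit decoding functor, at the cost of exactly the difficulties you identify at the end (representing isomorphism functors as single strategies on the universe game, and the functoriality of $\mathsf{El}$); what the paper's route buys is that all of that work disappears by definition, at the cost of a slightly less syntactic notion of type. Both are sound; yours is the road not taken.
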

\begin{proof}
Let $\Gamma \in \mathcal{PGD}$ be any context gamoid.
\begin{itemize}

\item {\bfseries \sffamily U-Form.} For each natural number $n \in \mathbb{N}$, the dependent gamoid $\mathcal{U}_n \in \mathsf{Ty}(\Gamma)$ is the trivial one such that $\gamma \mapsto \mathcal{C}(\mathcal{U}_n)$, $p \mapsto \textsf{id}_{\mathcal{C}(\mathcal{U}_n)}$ for all $\gamma : \Gamma$, $p : \gamma =_\Gamma \! \gamma'$, where $\mathcal{U}_n$ is the $n^\text{th}$ universe game. 

\item {\bfseries \sffamily U-Intro, Elim, and Comp.} For any dependent gamoid $G \in \mathsf{Ty}(\Gamma)$, we have the strategy $G\gamma : \mathcal{U}_n$ for some $n \in \mathbb{N}$, for each $\gamma \in \Gamma$, and the functor $Gp : G\gamma \to G\gamma'$ for each equality $p : \gamma =_\Gamma \! \gamma'$, so it clearly induces the object
\begin{equation*}
\textstyle G \in \widehat{\prod}(\Gamma, \mathcal{U}_n).
\end{equation*}
Similarly, the strategy $\mathcal{U}_n : \mathcal{U}_{n+1}$ for each $n \in \mathbb{N}$ induces the object
\begin{equation*}
\textstyle \mathcal{U}_n \in \widehat{\prod}(\Gamma, \mathcal{U}_{n+1}).
\end{equation*}

\item {\bfseries \sffamily U-Cumul.} If $G \in \widehat{\prod}(\Gamma, \mathcal{U}_n)$, then clearly $G \in \widehat{\prod}(\Gamma, \mathcal{U}_{n+1})$ by the definition of the universe games.
\end{itemize}
\end{proof}

\section{Intensionality}
\label{Intensionality}
We now investigate \emph{how intensional the model of ITT in $\mathcal{PGD}$ is} through some of the rules in the type theory. We write $a =_A \! a'$ or just $a = a'$ for the Id-type of the terms $a, a' : A$ and $\vdash a \equiv a' : A$ for the judgemental equality in ITT.

\subsection{Equality Reflection}
The principle of \emph{equality reflection (EqRefl)}, which states that if two terms are propositionally equal, then they are judgementally equal too, is the difference between ITT and ETT: Roughly, ETT is ``ITT plus EqRefl''. 

It is straightforward to see that the model in $\mathcal{PGD}$ refutes EqRefl, as two computationally equal strategies are not necessarily strictly equal. Hence, it is a model of ITT, not ETT.

\subsection{Function Extensionality} 
Next, we consider the axiom of \emph{function extensionality (FunExt)} which states that: For any type $A$, dependent type $B : A \to \mathcal{U}$, and terms $f, g : \prod_{x:A}B(x)$, we can inhabit the type
\begin{equation*}
\textstyle \prod_{x:A} f(x) = g(x) \to f = g.
\end{equation*}

In the same way as \cite{hofmann1998groupoid} did, the model in $\mathcal{PGD}$ admits this principle. To see this explicitly, let $B : A \to \mathcal{PGD}$ be a dependent gamoid, and $\phi, \psi \in \widehat{\prod} (A, B)$ dependent functions. Assume that there is an object $\tau \in \widehat{\prod} (A, \mathsf{Id}_{B}\{ \phi \& \psi \})$, where note that $\mathsf{Id}_{B}\{ \phi \& \psi \} : A \to \mathcal{PGD}$ is a dependent gamoid. We then have an identification $\tau_p : \mathsf{Id}_{B}\{ \phi \& \psi \}(p) \bullet (\tau \bullet \sigma) =_{\phi \bullet \sigma' =_{B\sigma'} \psi \bullet \sigma'} \! \tau \bullet \sigma'$, i.e., 
\begin{equation*}
\tau_p : (\psi_p) \odot (\tau \bullet \sigma) \odot (\phi_p)^{\star} =_{\phi \bullet \sigma' =_{B\sigma'} \psi \bullet \sigma'} \! \tau \bullet \sigma'
\end{equation*}
for each $p : \sigma =_A \! \sigma'$. But it implies the naturality condition for the family $(\tau \bullet \sigma)_{\sigma : A}$ because there is no non-trivial identification in Id-gamoids. Therefore, we may take $\& \{ \tau \bullet \sigma \ \! | \ \! \sigma : A \}$ as an identification between $\phi$ and $\psi$.

\subsection{Uniqueness of Identity Proofs}
Next, we investigate the principle of \emph{uniqueness of identity proofs (UIP)} which states that: For any type $A$, the following type can be inhabited
\begin{equation*}
\textstyle \prod_{a_1, a_2 : A} \prod_{p, q : a_1 = a_2} p = q
\end{equation*}
Remarkably, the model in $\mathcal{PGD}$ refutes UIP, which is the main improvement in comparison with the model in the CwF $\mathcal{IPG}$ in Part I (\cite{yamada2016game}). Consider the \emph{boolean gamoid} $\mathsf{B}$ whose plays are prefixes of the sequences $q_{\mathsf{tt}} . \mathsf{tt}, q_{\mathsf{ff}} . \mathsf{ff}$ with all isomorphism strategies as identifications (i.e., it is a canonical gamoid). Let us write $\bullet : \mathsf{B}$ for the unique total strategy. Then explicitly, the identifications are the copy-cat strategy $\textsf{cp}_{\mathsf{B}}$ and the ``reversing" strategy $\textsf{rv}_{\mathsf{B}}$. We then have $\textsf{cp}_{\mathsf{B}} \neq_{\bullet =_{\mathsf{B}} \bullet} \textsf{rv}_{\mathsf{B}}$ because the identifications in $\bullet =_{\mathsf{B}} \bullet$ are only the trivial ones.

\begin{remark}
This argument is essentially the same as how the groupoid model in \cite{hofmann1998groupoid} refutes UIP.
\end{remark}

\subsection{Criteria of Intensionality}
There are Streicher's three \emph{Criteria of Intensionality}:
\begin{itemize}
\item {\bf I.} $A : \mathcal{U}, x, y : A, z : x=_A \! y \not \vdash x \equiv y : A$
\item {\bf II.} $A : \mathcal{U}, B : A \to \mathcal{U}, x, y : A, z : x=_A \! y \not \vdash B(x) \equiv B(y) : \mathcal{U}$
\item {\bf III.} If $\vdash p : t =_A \! t'$, then $\vdash t \equiv t' : A$
\end{itemize} 
It is straightforward to see that the model in $\mathcal{PGD}$ validates the criteria I and II but refutes the criterion III. Note that HoTT has the criteria I and II but not III.

\subsection{Univalence}
We finally analyze the \emph{univalence axiom (UA)}, the heart of HoTT, which states that
\begin{equation*}
(A =_\mathcal{U} \! B) \simeq (A \simeq B)
\end{equation*}
for all types $A$ and $B$ (for the definition of $\simeq$, see \cite{voevodsky2013homotopy}). 
It is then easy to see that this axiom holds for the model in $\mathcal{PGD}$ because identifications in the universe gamoids are isomorphism strategies. Our definition of morphisms in a gamoid (they are not qcc strategies but isomorphism strategies) and interpretation of universes as canonical gamoids was mainly to establish this result. 

However, note that we only have the trivial identifications between identifications. Thus, it is a future work to interpret UA as well as the infinite hierarchy of Id-types.

\section{Conclusion}
\label{Conclusion}
In the present paper, we defined a new game-theoretic interpretation of intensional type theory with $\prod$-, $\sum$-, and Id-types as well as universes. It can be seen essentially as a concrete instance of the groupoid model developed by Hofmann and Streicher \cite{hofmann1998groupoid}.

Our model refutes UIP and admits UA as well as FunExt, though it does not interpret non-trivial higher propositional equalities. Thus, in Part III, we shall generalize predicative gamoids to be an instance of $\omega$-groupoids to interpret the hierarchy of Id-types as the models in \cite{warren2011strict, van2011types, lumsdaine2009weak} did. As another future work, we shall address the problem of definability and full abstraction.

Finally, note that, comparing with abstract, categorical models of the type theory such as the ($\omega$-)
groupoid model, our game-theoretic model is very concrete; and in contrast
with homotopy-theoretic models, our model directly represents computations or algorithms of the type theory in an intuitive manner. In this sense, we believe that our model is not merely a tool to analyze the syntax but a mathematical formulation of the philosophy and concepts on the type theory.

\pagebreak

\bibliographystyle{alpha}
\bibliography{GamesAndStrategies,HoTT,Recursion}

\end{document}